\newcommand{\Oh}{\ensuremath{\mathcal{O}}}
\newcommand{\Z}{\mathbb{Z}}
\newcommand{\R}{\mathbb{R}}
\newcommand{\N}{\mathbb{N}}
\newcommand{\mwis}{\textsc{Maximum Weight Independent Set}\xspace}
\newcommand{\mwishyp}{\textsc{Maximum-Weight Independent Set}\xspace}
\newcommand{\mis}{\textsc{Maximum Independent Set}\xspace}
\newcommand{\treewidth}{\mathrm{tw}}
\def\cqedsymbol{\ifmmode$\lrcorner$\else{\unskip\nobreak\hfil
\penalty50\hskip1em\null\nobreak\hfil$\lrcorner$
\parfillskip=0pt\finalhyphendemerits=0\endgraf}\fi} 
\newcommand{\cqed}{\renewcommand{\qed}{\cqedsymbol}}
\def \eps {\varepsilon}
\newtheorem{lemma}{Lemma}[section]
\newtheorem{corollary}[lemma]{Corollary}
\newtheorem{theorem}[lemma]{Theorem}
\newtheorem{claim}[lemma]{Claim}
\theoremstyle{definition}
\begin{document}

\title{Subexponential-time Algorithms for Maximum Independent Set in $P_t$-free and Broom-free Graphs \footnote{A preliminary version of the paper, with weaker results and only a subset of authors, appeared in the proceedings of IPEC 2016 \cite{BMT}.
This research is a part of projects that have received funding from the European Research Council (ERC) under the European Union's Horizon 2020 research and innovation programme under grant agreement No 714704 (Marcin Pilipczuk), 715744 (Daniel Lokshtanov),  280152 and 725978 (G\'abor Bacs\'o and D\'aniel Marx).
Research of Zsolt Tuza  was supported by the National Research,
 Development and Innovation Office -- NKFIH under the grant SNN 116095.}}

\author{Gábor Bacsó\thanks{Institute for Computer Science and Control, Hungarian Academy of Sciences, Hungary.} \and
Daniel Lokshtanov\thanks{Department of Informatics, University of Bergen, Norway} \and
Dániel Marx\thanks{Institute for Computer Science and Control, Hungarian Academy of Sciences, Hungary.} \and
Marcin Pilipczuk\thanks{Institute of Informatics, University of Warsaw, Poland} \and
Zsolt Tuza\thanks{Alfréd Rényi Institute of Mathematics, Budapest and and Department of Computer Science and Systems Technology, University of Pannonia, Veszprém, Hungary} \and
Erik Jan van Leeuwen \thanks{Department of Information and Computing Sciences, Utrecht University, The Netherlands}}

\date{}

\maketitle

\begin{textblock}{20}(0, 11.5)
\includegraphics[width=40px]{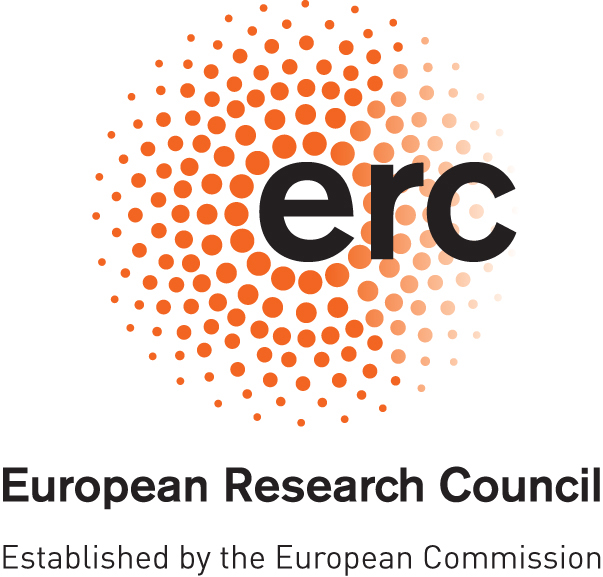}%
\end{textblock}
\begin{textblock}{20}(-0.25, 11.9)
\includegraphics[width=60px]{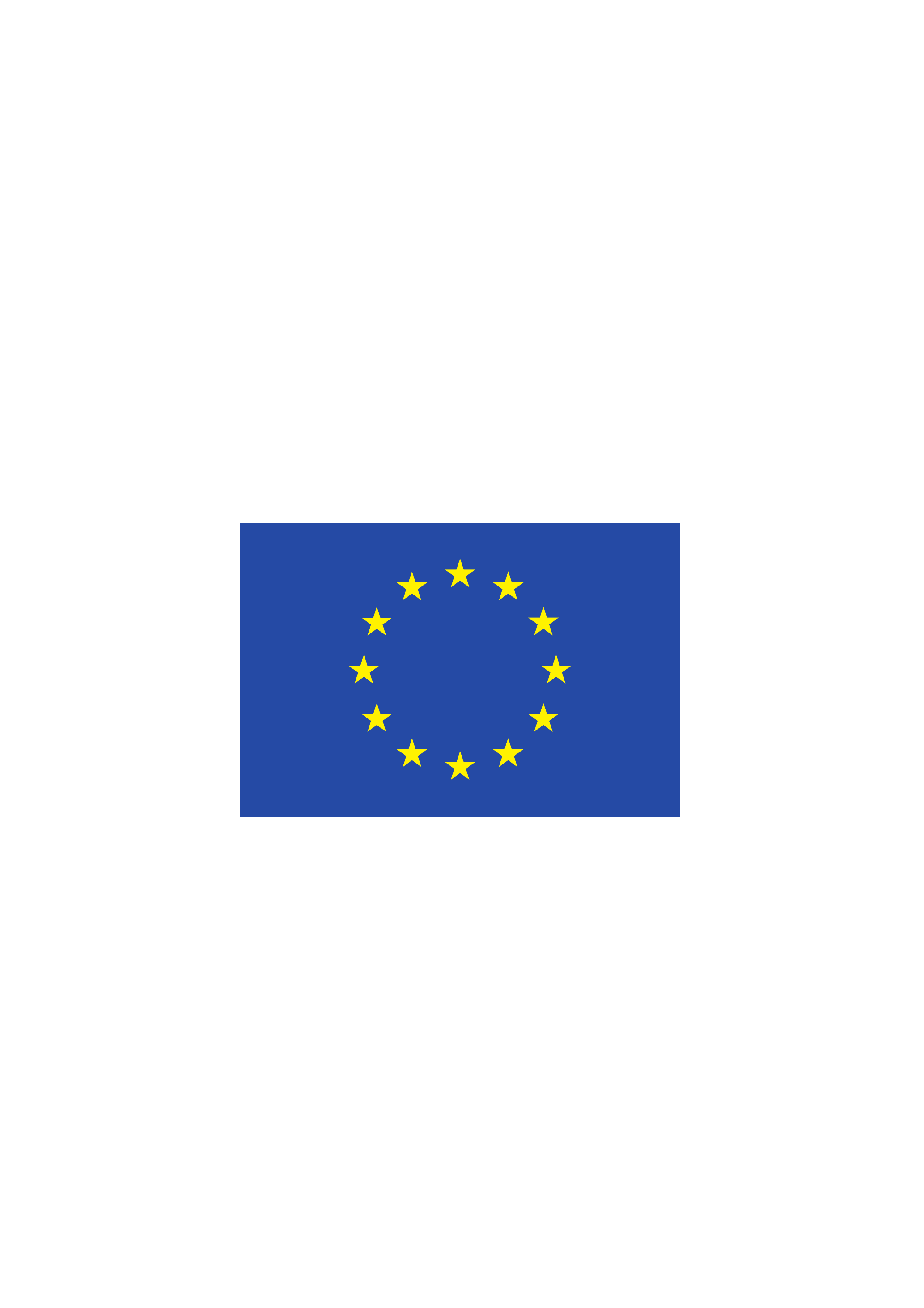}%
\end{textblock}

\begin{abstract}
  In algorithmic graph theory, a classic open question is to
  determine the complexity of the \textsc{Maximum Independent Set}
  problem on $P_t$-free graphs, that is, on graphs not containing any
  induced path on $t$ vertices. So far, polynomial-time algorithms are
  known only for $t\le 5$ [Lokshtanov et al., SODA 2014, 570--581, 2014], and an algorithm for $t=6$ announced recently
[Grzesik et al. Arxiv 1707.05491, 2017]. 
 Here we study the
  existence of subexponential-time algorithms for the problem: we show that for any
  $t\ge 1$, there is an algorithm for \textsc{Maximum Independent Set}
  on $P_t$-free graphs whose running time is subexponential in the
  number of vertices. 
  Even for the weighted version MWIS, the problem is solvable in $2^{\Oh(\sqrt {tn \log n})}$ time on $P_t$-free graphs. For approximation of MIS in broom-free graphs, a similar time bound is proved.

  \textsc{Scattered Set} is the generalization of \textsc{Maximum
    Independent Set} where the vertices of the solution are required to be at
  distance at least $d$ from each other. We give a complete characterization of those graphs $H$ for which \textsc{$d$-Scattered Set} on $H$-free graphs can be solved in time subexponential in the {\em size of the input} (that is, in the number of vertices plus the number of edges):
\begin{itemize}
\item If every component of $H$ is a path, then \textsc{$d$-Scattered Set}
  on $H$-free graphs with $n$ vertices and $m$ edges can be solved in time $2^{\Oh(|V(H)|\sqrt{n+m}\log (n+m))}$, even if $d$ is part of the input.
\item Otherwise, assuming the Exponential-Time Hypothesis (ETH), there is no $2^{o(n+m)}$-time algorithm for \textsc{$d$-Scattered Set} for any fixed $d\ge 3$ on $H$-free graphs with $n$-vertices and $m$-edges.
\end{itemize}
\end{abstract}

\section{Introduction}
There are some problems in discrete optimization that can be considered fundamental. The \textsc{Maximum Independent Set} problem (MIS, for short) is one of them.
It takes a graph $G$ as input, and asks for the maximum number
 $\alpha(G)$ of mutually nonadjacent (i.e., independent) vertices
 in $G$.
On unrestricted input, it is not only NP-hard (its decision version ``Is $\alpha(G)\ge k$?'' being NP-complete), but APX-hard as well, and, in fact,  not even approximable within $\Oh(n^{1-\eps})$ in polynomial time
 for any $\eps>0$ unless P=NP, as proved by Zuckerman \cite{DBLP:journals/toc/Zuckerman07}.
For this reason, those classes of graphs
 on which MIS becomes tractable are of definite interest.
One direction of this area is to study the complexity of MIS on
 \emph{$H$-free graphs}, that is, on graphs not containing any
 \emph{induced} subgraph isomorphic to a given graph $H$.

For the majority of the graphs $H$, we know a negative answer on the complexity question. It is easy to see that if $G'$ is obtained from $G$ by subdividing each edge with $2t$ new vertices, then $\alpha(G')=\alpha(G)+t|E(G)|$ holds. This can be used to show that MIS is NP-hard on $H$-free graphs whenever $H$ is not a
 forest, and also if $H$ contains a tree component with at least two
 vertices of degree larger than 2 (first observed in \cite{Alekseev82}, see,
 e.g., \cite{LokshtanovVV14}). As MIS is known to be NP-hard on graphs of maximum degree at most 3, the case when $H$ contains a vertex of degree
 at least 4 is also NP-hard.

The above observations do not cover the case when every component of $H$ is either a path, or a tree with exactly one degree-3 vertex $c$ with three paths of arbitrary lengths starting from $c$. There are no further unsolved classes but even this collection means infinitely many cases.
    For decades, on these graphs $H$ only partial results have been obtained,
proving polynomial-time solvability in some cases. A classical algorithm of Minty \cite{Minty80} and its corrected form by Sbihi \cite{Sbihi1980} solved the problem when $H$ is a claw (3 paths of length 1 in the model above). This happened in 1980. Much later, in 2004, Alekseev  \cite{Alekseev04} generalized this result by an algorithm for $H$ isomorphic to a fork (2 paths of length 1 and one path of length 2).

The seemingly easy case of $P_t$-free graphs is poorly understood (where $P_t$ is the path on $t$ vertices).
MIS on $P_t$-free graphs is not known to be NP-hard for any $t$; for
all we know, it could be polynomial-time solvable for every fixed
$t\ge 1$.  $P_4$-free graphs (also known as cographs) have a very simple
structure, which can be used to solve MIS with a linear-time recursion,
%
%
 but this does not generalize to $P_t$-free graphs for larger $t$.
In 2010, it was a breakthrough when Randerath and
Schiermeyer~\cite{randerath} stated that MIS on $P_5$-free graphs was
solvable in subexponential time, more precisely within
$\Oh(C^{n^{1-\eps}})$ for any constants $C>1$ and $\eps<1/4$.  Designing
an algorithm based on deep results, Lokshtanov et al. \cite{LokshtanovVV14}
finally proved that MIS is polynomial-time solvable on $P_5$-free
graphs. More recently, a {\em quasipolynomial} ($n^{\log^{\Oh(1)}
  n}$-time) algorithm was found for $P_6$-free graphs
\cite{DBLP:conf/soda/LokshtanovPL16} and finally a polynomial-time
algorithm for $P_6$-free graphs was announced \cite{grzesik}.

We explore MIS and some variants on $H$-free graphs
from the viewpoint of {\em subexponential-time algorithms} in this work. That is,
instead of aiming for algorithms with running time $n^{\Oh(1)}$ on
$n$-vertex graphs, we ask if $2^{o(n)}$ algorithms are possible.  Very
recently, Brause~\cite{brause} and independently the conference version
of this paper \cite{BMT} observed that the subexponential algorithm of
Randerath and Schiermeyer~\cite{randerath} can be generalized to
arbitrary fixed $t\ge 5$ with running time roughly
$2^{\Oh(n^{1-1/t})}$. Our first result shows a significantly improved
subexponential-time algorithm for every $t$.
\begin{theorem}\label{thm:mainMIS}
For every fixed $t\ge 5$, \textsc{MIS} on $n$-vertex $P_t$-free graphs can be solved in subexponential time, namely, it can be solved by a $2^{\Oh(\sqrt{n\log n})}$-time algorithm.
\end{theorem}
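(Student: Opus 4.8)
The plan is a recursive branching algorithm whose running time is bounded via the size of its recursion tree. We may assume the input graph $G$ is connected, since otherwise we run the algorithm on each connected component and add the answers; we will also use the elementary fact that a connected $P_t$-free graph has diameter at most $t-2$, because a shortest path between two vertices is an induced path on fewer than $t$ vertices.

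Set the degree threshold $D:=\lceil\sqrt{n\log n}\,\rceil$, where $n=|V(G)|$ is the number of vertices of the current instance. If $G$ has a vertex $v$ of degree at least $D$, we branch using the identity $\alpha(G)=\max\{\,1+\alpha(G-N[v]),\ \alpha(G-v)\,\}$: in the first branch we delete the at least $D+1$ vertices of $N[v]$ and recurse on the connected components of $G-N[v]$, and in the second branch we recurse on $G-v$. Letting $W(n)$ denote the worst-case running time on an $n$-vertex instance, this produces a recurrence of the shape $W(n)\le W(n-1)+\sum_i W(n_i)+n^{\Oh(1)}$ with $\sum_i n_i\le n-D$, and a routine calculation — using that the function $2^{c\sqrt{m\log m}}$ grows fast enough that the sum $\sum_i W(n_i)$ is dominated by the term of the largest component, of size at most $n-D$ — shows that $W(n)\le 2^{c\sqrt{n\log n}}$ is a valid solution once $c$ is a large enough constant. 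The point is that the discard branch $G-v$ barely shrinks the instance, but a discard chain can only continue until a select branch removes $D=\sqrt{n\log n}$ vertices at once, and the saving in the exponent from such a step outweighs the $n^{\Oh(1)}$ cost paid per discard step.

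When the branching bottoms out we are left with an instance of maximum degree less than $D$, and for such instances I would invoke the structural statement: \emph{every $P_t$-free graph of maximum degree at most $D$ has treewidth $\Oh_t(D)$} (equivalently, a balanced separator of size $\Oh_t(D)$, which one could feed into the same divide-and-conquer). Granting this, on such an instance we compute a constant-factor approximate tree decomposition of width $\Oh_t(D)=\Oh_t(\sqrt{n\log n})$ in time $2^{\Oh_t(\sqrt{n\log n})}\cdot n^{\Oh(1)}$ and then run the textbook dynamic program for \mis over it, again in time $2^{\Oh_t(\sqrt{n\log n})}\cdot n^{\Oh(1)}$. Combining the two regimes gives the claimed running time $2^{\Oh(\sqrt{n\log n})}$; the threshold $D=\sqrt{n\log n}$ is exactly the value balancing the $n^{\Oh(1)}$-per-step cost of the branching (there can be up to $n$ discards on a root--leaf path, organised into blocks of size roughly $\sqrt{n\log n}$) against the treewidth of the leftover bounded-degree instance.

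The main obstacle is the structural lemma, and this is precisely where $P_t$-freeness is essential: bounded-degree graphs can have treewidth $\Theta(n)$ in general (bounded-degree expanders, which have diameter $\Theta(\log n)$ and are therefore very far from $P_t$-free), so bounded degree alone says nothing. To prove the lemma I would combine the diameter bound with a Bacsó--Tuza-type decomposition of connected $P_t$-free graphs — a dominating set inducing a subgraph of bounded radius — argue that bounded degree forces this dominating ``core'' to be small relative to $D$, and then build a tree decomposition whose bags are unions of a bounded number of the small closed neighbourhoods that cover $V(G)$, laid out along the structure of the core. That the correct target is a bound \emph{linear} in $D$ is suggested by the example of a clique each of whose vertices carries a large bundle of pendant vertices: it is $P_5$-free, has $\Theta(D^2)$ vertices, yet has treewidth $\Theta(D)$. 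Obtaining a linear rather than polynomial dependence on $D$ is the delicate step, and is what I expect to be genuinely hard, since any superlinear dependence would already push the running time past $2^{\Oh(\sqrt{n\log n})}$.
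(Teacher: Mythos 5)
Your algorithmic framework coincides with the paper's: branch on a vertex of degree at least roughly $\sqrt{n\log n}$, and once the maximum degree drops below the threshold, use a treewidth bound of the form $\Oh(t\Delta)$ for $P_t$-free graphs of maximum degree $\Delta$ together with the standard dynamic programming over a tree decomposition; the recursion-tree calculation you sketch is essentially the paper's Lemma~\ref{lem:cpx}. The problem is that the entire weight of the argument rests on the structural statement that a $P_t$-free graph of maximum degree $\Delta$ has treewidth linear in $\Delta$, and you do not prove it --- you explicitly flag it as the step you expect to be genuinely hard. The route you sketch (a Bacs\'o--Tuza-type dominating structure of bounded radius, with bags formed from boundedly many closed neighbourhoods) does not obviously deliver a bound linear in $\Delta$: bounded degree does not make the dominating core small relative to $\Delta$ (any dominating set in a graph of maximum degree $\Delta$ has at least $n/(\Delta+1)$ vertices, and a connected core of radius $\Oh(t)$ can have $\Delta^{\Theta(t)}$ vertices), and nothing in the sketch explains why each bag would need only a bounded number of neighbourhoods. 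Note moreover that the diameter bound you mention already gives $n \leq \Oh(\Delta^{t-2})$ for connected $P_t$-free graphs of maximum degree $\Delta$, so a polynomial-in-$\Delta$ treewidth bound is trivial and would only reproduce the earlier $2^{\Oh(n^{1-1/t})}$-type results of Randerath--Schiermeyer and Brause; the whole content of the claimed $2^{\Oh(\sqrt{n\log n})}$ bound lies in the linear dependence on $\Delta$, which your proposal leaves unestablished.

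The paper closes exactly this gap with a Gy\'arf\'as-style path-growing argument (Lemma~\ref{lem:path-argument}): starting from an arbitrary vertex $v_0$, grow an induced path $v_0,v_1,\ldots$, always choosing $v_{i+1}$ in the neighbourhood of the unique heavy component of $G-N[\{v_0,\ldots,v_i\}]$; $P_t$-freeness forces the process to stop after fewer than $t-1$ steps, so the closed neighbourhood of the path, of size at most $(t-1)\Delta+1$, is a balanced separator with respect to every weight function. This yields treewidth $\Oh(t\Delta)$ and, plugged into the standard treewidth approximation algorithm, a polynomial-time tree decomposition of that width (Corollary~\ref{cor:path-argument}). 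Without this lemma, or a complete proof of an equivalent balanced-separator statement, your proposal is an algorithm schema rather than a proof of the theorem.
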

The algorithm is based on the combination of two ideas. First, we
generalize the observation of Randerath and Schiermeyer
\cite{randerath} stating that in a large connected $P_5$-free graph there exists
 a high-degree vertex. Namely, we prove that such a vertex always exists in a large connected $P_t$-free graph for general $t\geq 5$ and it can be used for efficient
branching.  Next we prove the combinatorial result that a $P_t$-free
graph of maximum degree $\Delta$ has treewidth $\Oh(t\Delta)$; the proof
is inspired by Gy\'arf\'as' proof of the $\chi$-boundedness of
$P_t$-free graphs \cite{gyarfas}.  Thus if the maximum degree drops
below a certain threshold during the branching procedure, then we can
use standard algorithmic techniques exploiting bounded treewidth.

While our algorithm works for $P_t$-free graphs with arbitrary large
$t$, it does not seem to be extendable to $H$-free graphs where $H$ is
the subdivision of a $K_{1,3}$. Hence, the existence of
subexponential-time algorithms on such graphs remains an open
question. However, we are able to give a subexponential-time
constant-factor approximation algorithm for the case when $H$ is a $(d,t)$-broom. A \emph{$(d,t)$-broom $B_{d,t}$} is a graph consisting of a path $P_t$ and $d$ additional
vertices of degree one, all adjacent to one of the endpoints of the path. In other words, $B_{d,t}$ is a star $K_{1,d+1}$ with one of the edges
subdivided to make it a path with $t$ vertices. For $d=2$, we obtain the \emph{generalized forks} and $t=3$, $d=2$ yields the traditional \emph{fork}.
We prove the following theorem; here $d$ and $t$ are considered constants, hidden in the big-$\Oh$ notation.
\begin{theorem}\label{thm:Fdt-free}
Let $d,t \geq 2$ be fixed integers. 
One can find a $d$-approximation to \mis{} on an $n$-vertex $B_{d,t}$-free graph $G$
in time $2^{\Oh(n^{3/4} \log n)}$.
\end{theorem}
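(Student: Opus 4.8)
The plan is to adapt the strategy behind Theorem~\ref{thm:mainMIS} --- reduce the maximum degree by branching, then finish via bounded treewidth --- but replacing $P_t$-freeness by $B_{d,t}$-freeness and absorbing the weaker structural control into a multiplicative factor of $d$ in the quality of the output. Two preliminary reductions are free. First, we may assume $G$ is connected: $\alpha$ is additive over connected components, so $d$-approximating each component and taking the union is a $d$-approximation for $G$, while the number of vertices only drops. Second, we may assume $\alpha(G) > n^{3/4}$: a maximum independent set of size at most $n^{3/4}$ can be found exactly by exhaustive search over all vertex subsets of that size in time $n^{\Oh(n^{3/4})} = 2^{\Oh(n^{3/4}\log n)}$. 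Under this assumption we are permitted to lose $\tfrac{d-1}{d}\,\alpha(G) = \Omega(n^{3/4})$ additively and still return a $d$-approximation, which is precisely the slack the argument will use.

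The heart of the matter is a broom analogue of the statement underlying Theorem~\ref{thm:mainMIS} that a large connected $P_t$-free graph has a high-degree vertex; that statement fails here because $B_{d,t}$-free graphs need not have bounded diameter. Instead, compute greedily a longest induced path $P = p_1 p_2 \cdots p_m$. If $m$ is bounded by a constant $t' = t'(d,t)$, then $G$ is $P_{t'}$-free and Theorem~\ref{thm:mainMIS} finishes the job exactly in time $2^{\Oh(\sqrt{n\log n})} \le 2^{\Oh(n^{3/4}\log n)}$. Otherwise $m \ge t'$; consider the endpoint $p_1$. Since $G$ is $B_{d,t}$-free, $p_1$ cannot be the head of an induced $B_{d,t}$ with handle $p_1 p_2 \cdots p_t$, and since $P$ is longest, every neighbour of $p_1$ outside $V(P)$ is adjacent to some $p_i$ with $i \ge 2$. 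Hence the set of ``private'' neighbours of $p_1$,
\[
A \;=\; N(p_1) \setminus \bigl( \{p_2,\dots,p_t\} \cup N(p_2) \cup \cdots \cup N(p_t) \bigr),
\]
satisfies $\alpha(G[A]) \le d-1$, because a size-$d$ independent subset of $A$ would, together with $p_1,\dots,p_t$, induce $B_{d,t}$. This is where the factor $d$ is spent: deleting $A$ changes $\alpha$ by at most the additive term $d-1$ (we have $\alpha(G) \le \alpha(G-A) + (d-1)$, and conversely independence is monotone), and in $G-A$ the whole neighbourhood of $p_1$ lies within distance $1$ of $\{p_2,\dots,p_t\}$, so $p_1$ becomes harmless. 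One then alternates such private-neighbourhood deletions with the usual high-degree branching: for a vertex $v$ of degree above a threshold $n^{\gamma}$ (a suitable constant $\gamma>0$), either the optimum avoids $N[v]$ --- add $v$, recurse on $G - N[v]$ --- or it meets $N[v]$, which we handle according to whether $\alpha(G[N(v)]) < d$, in which case $v$ lies in a large clique by Ramsey and we branch on which of its vertices is picked, or $\alpha(G[N(v)]) \ge d$, in which case the broom structure again isolates a small private part. Each recursive call then removes substantially many vertices while the branching is only $n^{\Oh(1)}$-fold and the accumulated additive loss stays below $\tfrac{d-1}{d}\,\alpha(G)$.

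The difficult part is the quantitative bookkeeping tying these pieces together. One must (i) establish a treewidth analogue of the $P_t$-free bound --- that a $B_{d,t}$-free graph of maximum degree $\Delta$ has treewidth polynomial in $\Delta$ (the family $\{K_{a,a}\}$, which is $B_{d,t}$-free for $t\ge3$, shows one cannot beat $\Theta(\Delta)$) --- so that the low-degree base case is solved exactly by a treewidth dynamic program; and (ii) choose the degree threshold so that the branching phase has recursion depth $\Oh(n^{3/4})$, hence total cost $n^{\Oh(n^{3/4})} = 2^{\Oh(n^{3/4}\log n)}$, while the number of private-neighbourhood deletions never exceeds $\Oh(\alpha(G)/d)$ and the treewidth at the base case stays $\Oh(n^{3/4})$. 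I expect (ii) --- making these constraints simultaneously satisfiable for every $d \ge 2$ --- to be the main obstacle, precisely because, unlike in the $P_t$-free setting, neither the diameter nor the local structure (neighbourhoods) of a $B_{d,t}$-free graph is bounded, so the progress in each branching step must be extracted from the delicate longest-induced-path analysis above rather than from a direct degree/diameter count.
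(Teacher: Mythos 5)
The central plank of your plan fails: step (i), the claim that a $B_{d,t}$-free graph of maximum degree $\Delta$ has treewidth polynomial in $\Delta$, is false. For $d,t\geq 2$ the broom $B_{d,t}$ contains an induced claw ($p_1$, its two pendant neighbours, and $p_2$), so every claw-free graph is $B_{d,t}$-free; in particular, line graphs of bounded-degree expanders are claw-free, have maximum degree at most $4$, and have treewidth linear in their number of vertices. So there is no low-degree base case that can be finished exactly by a treewidth dynamic program, and this is precisely why the paper never proves (or uses) any treewidth bound for broom-free graphs and settles for a $d$-approximation. Your quantitative step (ii) is also not just ``bookkeeping'': with a degree threshold $n^{\gamma}$ the Ramsey clique inside a neighbourhood with independence number below $d$ has size only about $n^{\gamma/(d-1)}$, so that branching sub-case gives time of the form $2^{\Oh(n^{1-\gamma/(d-1)}\log n)}$, which misses the exponent $3/4$ once $d$ is large; and the additive-loss budget of $\tfrac{d-1}{d}\alpha(G)$ must be respected along \emph{every} root-to-leaf path of the branching tree, with no mechanism offered to cap the number of lossy private-neighbourhood deletions at $\Oh(\alpha(G)/d)$. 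Finally, ``$p_1$ becomes harmless'' after deleting its private neighbourhood does not by itself yield measurable progress for the recursion.

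For comparison, the paper's proof avoids both issues. After the standard high-degree branching with threshold $\Delta(n)=\frac{1}{2dt}n^{1/4}$, it grows a connected set $A_0$ with $|N[A_0]|\approx n^{1/2}$; if $G-N[A_0]$ has no component of size more than $n-n^{3/4}$ it branches exhaustively on the separator $N[A_0]$. Otherwise, with $B$ the huge component and $S=N(B)$, either some $v\in S$ has $d$ independent neighbours $L\subseteq N(v)\cap B$ -- and then no induced $P_t$ starts at $v$ inside the component $D$ of $G-(N[L]\setminus\{v\})$ containing $A_0$, so Lemma~\ref{lem:path-argument} supplies a balanced separator of size $\Oh(t\Delta)=\Oh(n^{1/4})$ to branch on -- or every $v\in S$ has fewer than $d$ independent neighbours in $B$, in which case the algorithm computes an exact maximum independent set $I_A$ of the small side $G-B$ by brute force in $2^{\Oh(n^{3/4})}$ time and greedily commits to it; the factor $d$ is lost only in this single step, via the claim $|I|-|I'|\leq d|I_A|$, and the total running time follows from Lemma~\ref{lem:cpx}. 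Your use of the broom to force small independence number in a ``private'' neighbourhood of a path endpoint is in the same spirit as the paper's use of $L$ to forbid induced $P_t$'s starting at $v$, but without a balanced-separator mechanism (the path-growing lemma applied to a carefully chosen subgraph) and without a single controlled greedy commitment, neither the running time nor the approximation ratio goes through as proposed.
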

Let us remark that on $K_{1,d+1}$-free graphs, a folklore linear-time (and very simple) $d$-approximation algorithm exists for \mis{}; better $d/2$-approximation algorithms also exist~\cite{Bafna1996,Berman2000,Halldorsson1995,Yu1996}.
On fork-free graphs, \textsc{Independent Set} can be solved in polynomial time \cite{Alekseev04}.
For general graphs, we do not expect that a constant-factor
approximation can be obtained in subexponential time for the problem.
Strong evidence for this was given by Chalermsook
et~al.~\cite{DBLP:conf/focs/ChalermsookLN13}, who showed that the
existence of such an algorithm would violate the Exponential-Time
Hypothesis (ETH) of Impagliazzo, Paturi, and Zane, which can be
informally stated as $n$-variable \textsc{3SAT} cannot be solved in
$2^{o(n)}$ time (see
\cite{DBLP:books/sp/CyganFKLMPPS15,DBLP:journals/eatcs/LokshtanovMS11,DBLP:journals/jcss/ImpagliazzoPZ01}).

\textsc{Scattered Set} (also known under other names such as dispersion or
distance-$d$ independent set \cite{DBLP:conf/esa/MarxP15,DBLP:conf/esa/Thilikos11,DBLP:journals/dam/AgnarssonDH03,DBLP:journals/jco/RosenkrantzTR00,DBLP:conf/isaac/BhattacharyaH99,DBLP:journals/jco/EtoGM14}) is the natural generalization of
\textsc{MIS} where the vertices of the solution are required to be at
distance at least $d$ from each other; the size of the largest such
set will be denoted by $\alpha_d(G)$. We can consider with $d$ being part
of the input, or assume that $d\ge 2$ is a fixed constant, in which case we call the problem \textsc{$d$-Scattered Set}. Clearly, MIS is exactly the same as \textsc{2-Scattered Set}. Despite its similarity to \textsc{MIS}, the branching
algorithm of Theorem~\ref{thm:mainMIS} cannot be generalized: we give
evidence that there is no subexponential-time algorithm for
\textsc{3-Scattered Set} on $P_5$-free graphs. 
\begin{theorem}\label{thm:nosubexpdist3}
Assuming the ETH, there is no $2^{o(n)}$-time algorithm for \textsc{$d$-Scattered Set} with $d=3$ on $P_5$-free graphs with $n$ vertices.
\end{theorem}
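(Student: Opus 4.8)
The plan is to reduce from \mis{} on subcubic graphs, which admits no $2^{o(n)}$-time algorithm on $n$-vertex instances unless the ETH fails (a standard consequence of the Sparsification Lemma; see, e.g., \cite{DBLP:books/sp/CyganFKLMPPS15}). From such an instance $G$ — which we may assume has no isolated vertex, as these contribute the same additive term to $\alpha(G)$ and to the quantity constructed below — I would build in polynomial time a $P_5$-free graph $G'$ with $|V(G')| = |V(G)| + |E(G)| \le \tfrac{5}{2}|V(G)|$ and $\alpha_3(G') = \alpha(G)$. A $2^{o(n)}$-time algorithm for \textsc{$3$-Scattered Set} on $P_5$-free graphs would then decide whether $\alpha(G)\ge k$ in time $2^{o(|V(G)|)}$, contradicting the ETH.

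The graph $G'$ is, informally, the one-subdivision of $G$ with a clique added on the subdivision vertices. Formally, $V(G') = X\cup Y$, where $X = \{x_v : v\in V(G)\}$ is an independent set and $Y = \{y_e : e\in E(G)\}$ is a clique, and the only further edges are $x_v y_e$ whenever $v$ is an endpoint of $e$. Thus $G'$ is a split graph, and the first point to verify is that this alone gives $P_5$-freeness: no split graph contains an induced $2K_2$ (in such a $2K_2$ each of the two edges has an endpoint in the clique $Y$, and those endpoints would have to be adjacent), whereas $P_5$ does contain an induced $2K_2$ (its first and last edges); hence $G'$ is $2K_2$-free and a fortiori $P_5$-free. (One can also argue directly by pigeonhole on any five vertices of $G'$.)

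Next I would compute the relevant distances in $G'$. For distinct $u,v$, the vertices $x_u$ and $x_v$ are at distance $2$ if $uv\in E(G)$, via $y_{uv}$, and at distance at least $3$ otherwise, since they then have no common neighbour in $Y$; any two vertices of $Y$ are at distance $1$; and every $y_e$ lies within distance $2$ of every $x_v$ (if $v\notin e$, pick any edge $e'\ni v$ and use the walk $x_v\,y_{e'}\,y_e$). Consequently, if $S\subseteq V(G')$ is $3$-scattered, then either $S$ contains a vertex of $Y$, in which case it can contain nothing else and $|S| = 1$, or $S\subseteq X$, in which case the condition ``all pairwise distances are at least $3$'' is exactly ``the corresponding vertices of $G$ form an independent set''. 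Therefore $\alpha_3(G') = \max(\alpha(G),1) = \alpha(G)$, which finishes the reduction.

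The only delicate point is this distance bookkeeping, specifically the role of the clique on $Y$. The plain one-subdivision of $G$ (that is, with $Y$ independent) already satisfies $\alpha_3 = \alpha(G)$, but it fails to be $P_5$-free as soon as $G$ has an induced $P_3$; so the clique on $Y$ is what buys $P_5$-freeness, and the calibration to check is that inserting it does not pull any non-adjacent pair $x_u,x_v$ down to distance $2$ (which would spuriously forbid $u,v$ together) — indeed it brings such pairs to distance exactly $3$, still admissible for $d=3$ — while adjacent pairs remain at the forbidden distance $2$. I would also make sure to invoke the ETH-hardness of subcubic \mis{} in exactly the $2^{o(n)}$ form, and note that once isolated vertices are removed, $G'$ is connected, so no extra connectivity assumption on $G$ is needed.
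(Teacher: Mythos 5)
Your reduction is exactly the one in the paper: the same split-graph construction (vertex copies forming an independent set, edge copies forming a clique, incidences as edges), the same no-isolated-vertices assumption, and the same distance bookkeeping showing $\alpha_3(G')=\alpha(G)$; only the $P_5$-freeness argument is phrased differently (via $2K_2$-freeness of split graphs rather than the paper's direct count of clique vertices on an induced path), and both are correct.
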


In light of the negative result of Theorem~\ref{thm:nosubexpdist3}, we
slightly change our objective by aiming for an algorithm that is
subexponential in the {\em size of the input,} that is, in the total
number of vertices and edges of the graph $G$. As the number of edges
of $G$ can be up to quadratic in the number of vertices, this is a
weaker goal: an algorithm that is subexponential in the number of
edges is not necessarily subexponential in the number of vertices.
We give a complete characterization when such algorithms are possible for \textsc{Scattered Set}.

\begin{theorem} \label{thm:scatteredmain}
For every fixed graph $H$, the following holds.
\begin{enumerate}
\item If every component of $H$ is a path, then \textsc{$d$-Scattered Set} on $H$-free graphs with $n$ vertices and $m$ edges can be solved in time $2^{\Oh(|V(H)|\sqrt{n+m}\log(n+m))}$, even if $d$ is part of the input.
\item Otherwise, assuming the ETH, there is no $2^{o(n+m)}$-time algorithm for \textsc{$d$-Scattered Set} for any fixed $d\ge 3$ on $H$-free graphs with $n$-vertices and $m$-edges.
\end{enumerate}
\end{theorem}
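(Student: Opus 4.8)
The plan is to prove the two items separately. Item~1 is a short combination of ingredients from the introduction; item~2 consists of two ETH-hardness reductions, and the gadgetry of one of them is the real work.

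\emph{Item 1 (the algorithm).} Write $p=|V(H)|$. The first step is to reduce to the $P_t$-free case via the observation that every $H$-free graph is $P_{2p-1}$-free. Indeed, if a graph contains an induced $P_{2p-1}$ then, writing $H$ as a disjoint union of paths $P_{a_1},\dots,P_{a_c}$ with $\sum_i a_i=p$ and $c\le p$, we may lay these subpaths out consecutively along the $P_{2p-1}$ leaving exactly one unused vertex between consecutive blocks (this uses $\sum_i a_i+(c-1)\le 2p-1$ vertices of the path), and the chosen vertex set induces $H$. It therefore suffices to solve $d$-Scattered Set on an $n$-vertex $m$-edge $P_t$-free graph $G$ in time $2^{\Oh(t\sqrt{n+m}\log(n+m))}$ for $t=2p-1$. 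Put $\Delta=\lceil\sqrt{n+m}\rceil$ and let $X$ be the set of vertices of $G$ of degree larger than $\Delta$; since $\sum_v\deg(v)=2m$ we get $|X|=\Oh(\sqrt{n+m})$. The graph $G-X$ is $P_t$-free of maximum degree at most $\Delta$, so by the treewidth bound from the introduction $\treewidth(G-X)=\Oh(t\Delta)=\Oh(t\sqrt{n+m})$, and a tree decomposition of $G-X$ of width $\Oh(t\sqrt{n+m})$ can be computed by any treewidth-approximation algorithm well within the time budget. Adding $X$ to every bag turns this into a tree decomposition of $G$ itself, of width $\Oh(t\sqrt{n+m})+|X|=\Oh(t\sqrt{n+m})$, and on it we run the standard dynamic program for $d$-Scattered Set, whose state at a bag records, for each bag vertex, the distance (capped at $d$) to the nearest already-selected vertex; this takes $d^{\Oh(\treewidth)}\cdot n^{\Oh(1)}$ time. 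As we may assume $d\le n$, the total running time is $2^{\Oh(p\sqrt{n+m}\log(n+m))}$. Crucially the dynamic program runs on the true graph $G$, so distances are not distorted by vertex deletions --- this is exactly why a treewidth route is needed here instead of the branching of Theorem~\ref{thm:mainMIS}.

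\emph{Item 2 (reduction to two clean statements).} Suppose $H$ has a component that is not a path and fix $d\ge 3$. If $H$ is not a forest, let $g$ be its girth: any graph of girth larger than $g$ is $H$-free, since an induced copy of $H$ would contain a cycle of length $g$. Otherwise $H$ is a forest with a vertex of degree at least $3$, whose neighbourhood is independent, so $H$ contains an induced $K_{1,3}$ and every claw-free graph is $H$-free. Hence it is enough to show that, for every fixed $d\ge 3$ and every fixed $g$, under the ETH there is no $2^{o(n+m)}$-time algorithm for $d$-Scattered Set (i)~on graphs of girth larger than $g$, nor (ii)~on claw-free graphs.

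\emph{The two reductions and the main obstacle.} Both start from a problem that has no $2^{o(n+m)}$-time algorithm under the ETH on sparse instances, e.g.\ \mis{} on subcubic graphs (obtained from $3$-SAT by the textbook reduction). For (ii) I would pass to line graphs: $L(G_0)$ is claw-free and, when $G_0$ is subcubic, has $\Oh(n+m)$ vertices and edges; since $\mathrm{dist}_{L(G_0)}(e,f)=1+\min\{\mathrm{dist}_{G_0}(x,y):x\in e,\ y\in f\}$ for distinct edges $e,f$, a $d$-scattered set in $L(G_0)$ is precisely a set of edges of $G_0$ that are pairwise at distance at least $d-1$ between their nearest endpoints; maximising such a ``distance-$(d-1)$ matching'' is NP-hard on subcubic graphs by a linear-size reduction --- for $d=3$ this is \textsc{Maximum Induced Matching} --- which gives (ii). For (i) I would stretch the instance: replace every edge of $G_0$ by a path and attach to every original vertex a private pendant path, choosing all lengths to be constants (depending only on $d$ and $g$) so that the girth of the output exceeds $g$ and a maximum $d$-scattered set of the output equals $\alpha(G_0)$ plus a quantity determined solely by $|V(G_0)|$ and $|E(G_0)|$; since the lengths are constant the output still has $\Oh(n+m)$ vertices and edges, so a $2^{o(n+m)}$-time algorithm would refute the ETH. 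The main obstacle is the design and verification of this stretching gadget for (i): it must simultaneously force the prescribed girth, keep the output sparse, and make the distances fit the ``pairwise distance $\ge d$'' constraint tightly enough that interior vertices of the edge-paths and of the pendant paths are neither profitable to select nor forced into the solution in unintended ways --- and all of this must go through uniformly over every fixed $d\ge 3$. Item~1, by contrast, is essentially bookkeeping once the reduction to $P_t$-free graphs and the ``absorb $X$ into every bag'' idea are in hand.
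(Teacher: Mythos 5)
Your item~1 is correct and is essentially the paper's own argument: delete the $\Oh(\sqrt{m})$ vertices of degree above $\sqrt{n+m}$, apply Corollary~\ref{cor:path-argument} to the bounded-degree remainder, put the deleted set back (Lemma~\ref{lem:twbound}), and run the $d^{\Oh(w)}n^{\Oh(1)}$ dynamic program on $G$ itself; your embedding of $H$ into $P_{2|V(H)|-1}$ is also fine. Your decomposition of item~2 into a claw-free case and a large-girth/$C_t$-free case is likewise the paper's (Corollary~\ref{cor:nosubexp}).

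The genuine gap is in item~2: both ETH-hardness reductions, which are the actual content of that half, are asserted rather than carried out, and for the girth case the plan as stated does not work. For the correspondence ``$\alpha_d$ of the stretched graph $=\alpha(G_0)+$ (a term depending only on $|V(G_0)|,|E(G_0)|$)'' to hold in the straightforward way, adjacent original vertices must end up at distance $<d$ while non-adjacent ones end up at distance $\ge d$; this forces the length $L$ of the path replacing an edge to satisfy $d/2\le L\le d-1$. Hence the girth of the output is at most $(d-1)\cdot\mathrm{girth}(G_0)$, and for a fixed $d$ and large $g$ you cannot reach girth $>g$ by choosing ``lengths depending on $d$ and $g$'' starting from an arbitrary subcubic \mis{} instance: you first need ETH-hardness of \mis{} on subcubic graphs of girth $>g$ (the paper's Lemma~\ref{lem:girth}, proved by the classical $2g$-fold edge subdivision that shifts $\alpha$ by $gm$), and only then the $(d-1)$-stretch with a pendant path of length $d-1$ attached to each of the $(d-2)m$ subdivision vertices (Theorem~\ref{thm:nosubexpcycle}); note also that the paper attaches the pendant paths to the subdivision vertices, not to the original vertices as you propose, since that is what makes the additive term exactly $m(d-2)$. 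Similarly, in the claw-free case your reduction bottoms out at the unproven claim that ``distance-$(d-1)$ matching'' has no $2^{o(n+m)}$ algorithm on bounded-degree graphs for every fixed $d\ge 3$; proving this for all parities of $d$ is exactly the paper's Theorem~\ref{thm:nosubexpclaw} (attach pendant paths of $\lfloor(d-1)/2\rfloor$ edges to every vertex, additionally subdividing every edge when $d$ is even, and take the line graph). So the overall strategy matches the paper, but the gadgets whose design and verification you yourself flag as ``the main obstacle'' are precisely what is missing, and in the girth case the single-step stretching you describe must be replaced by the two-step argument above.
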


The algorithmic side of Theorem~\ref{thm:scatteredmain} is based on
the combinatorial observation that the treewidth of $P_t$-free graphs
is sublinear in the number of edges, which means that standard
algorithms on bounded-treewidth graphs can be invoked to solve the
problem in time subexponential in the number of edges. It has not
escaped our notice that this approach is completely generic and could
be used for many other problems (e.g., \textsc{Hamiltonian Cycle},
\textsc{3-Coloring}, and so on), where $2^{\Oh(t)}\cdot n^{\Oh(1)}$ or perhaps 
$2^{t\cdot\log^{\Oh(1)} t}\cdot n^{\Oh(1)}$-time algorithms are known on
graphs of treewidth $t$. For the lower-bound part of
Theorem~\ref{thm:scatteredmain}, we need to examine only two cases:
claw-free graphs and $C_t$-free graphs (where $C_t$ is the cycle on
$t$ vertices); the other cases then follow immediately.


The paper is organized as follows. Section~\ref{sec:preliminaries} introduces basic notation and contains some technical tools for bounding the running time of recursive algorithms. Section~\ref{sec:gyarfas} contains the combinatorial results that allow us to bound the treewidth of $P_t$-free graphs. The algorithmic results for \mis{} (Theorems \ref{thm:mainMIS} and \ref{thm:Fdt-free}) appear in  Section~\ref{sec:mis-alg}. The upper and lower bounds for \textsc{$d$-Scattered Set}, which together prove Theorem~\ref{thm:scatteredmain}, are proved in Section~\ref{sec:scat}.

\section{Preliminaries}
\label{sec:preliminaries}

Simple undirected graphs are investigated here throughout. The vertex set of graph $G$ will be denoted by $V(G)$, the edge set by $E(G)$. 
The notation $d_G(x,y)$ for distance, $G[X]$ for the subgraph induced by the vertex set $X$, will have the usual meaning, similarly as $N_G[X]$ and $N_G(X)$ for the closed and open neighborhood respectively of vertex set $X$ in $G$.  
%
%
$\Delta (G)$ is the maximum degree in $G$. For a vertex set $X$ in $G$, $G-X$ means the induced subgraph $H:=G[V-X]$.
$P_t$ ($C_t$) is the chordless path (cycle) on $t$ vertices.
Finally, a graph is $H$-free if it does not contain $H$ as an induced subgraph.

A \emph{distance-$d$ ($d$-scattered) set} in a graph $G$ is a vertex set $S\subseteq V(G)$ such that for every pair of vertices in $S$, the distance between them  is at least $d$ in the graph. For $d=2$, we obtain the traditional notion of independent  set (stable set). For  $d>c$, a distance-$d$ set is a distance-$c$ set as well, for example, for $d\ge 2$, any distance-$d$ set is an independent set.

The algorithmic problem \mwis{} is the problem of maximizing the sum of the weights in an independent set of a graph with nonnegative vertex weights $w$.  The maximum is denoted by $\alpha _w(G)$.  For a weight $w$ function that has value $1$ everywhere, we obtain the usual problem \mis{} (MIS) with maximum $\alpha(G)$.

An algorithm $A$ is \emph{subexponential} in parameter $p>1$  if the number of steps executed by $A$ is a subexponential function of the parameter $p$. We will use here  this notion for graphs, mostly in the following cases: $p$ is the number $n$ of vertices, the number $m$ of edges, or $p=n+m$ (which is considered to be the size of the input generally).
Several different definitions are used in the literature under the name \emph{subexponential function}. Each of them means some condition: this function (with variable $p>1$, called the parameter) may not be larger than some bound, depending on $p$. Here we  use two versions, where the bound is of type $exp(o(p))$ and $exp(p^{1-\epsilon})$ respectively, with some $\epsilon >0$. (Clearly, the second one is the more strict.)
Throughout the paper, we state our results emphasizing which version we mean.
A problem $\Pi $ is \emph{subexponential} if there exists some \emph{subexponential} algorithm solving $\Pi$.

\subsection{Time analysis of recursive algorithms}
To formally reason about time complexities, we will need the following technical lemma.
\begin{lemma}\label{lem:cpx}
Let $\Delta: \R_{\geq 0} \to \R_{\geq 0}$ be a concave and nondecreasing function with $\Delta(0) = 0$, $\Delta(x) \leq x$ for every $x \geq 1$, and $\Delta(x) \leq \Delta(x/2) \cdot (2-\gamma)$ for some $\gamma > 0$ and every $x \geq 2$.
Let $S,T : \N \to \N$ be two nondecreasing functions such that
we have $S(0) = T(0) = 0$, moreover, for some universal constant $c$  and $S(1),T(1) \leq c$ and for every $n \geq 2$:
\begin{align}
T(n) \leq 2^{cn \log n / \Delta(n)} + \max(&S(n), T(n-1) + T(n-\lceil \Delta(n) \rceil),\nonumber\\
    &\max_{1 \leq k \leq \lfloor \frac{n}{\Delta(n)} \rfloor} 2^k \cdot n \cdot T(n-\lceil k \Delta(n) \rceil)).\label{eq:cpx}
\end{align}
Then, for some constant $c'$ depending only on $c$ and $\gamma$, for every $n\geq 1$ it holds
that
$$T(n) \leq 2^{c' n \log n / \Delta(n)} \cdot \left(S(n)+1\right).$$
\end{lemma}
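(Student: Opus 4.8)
The plan is to prove the bound $T(n) \leq 2^{c' n \log n / \Delta(n)} \cdot (S(n)+1)$ by induction on $n$, after first extracting the structural consequences of the hypotheses on $\Delta$ that make the recursion behave well. The key preliminary observations about $\Delta$ are: (i) since $\Delta$ is concave with $\Delta(0)=0$, the function $x \mapsto \Delta(x)/x$ is nonincreasing, so $\Delta(n-k\lceil\Delta(n)\rceil)/(n-k\lceil\Delta(n)\rceil) \geq \Delta(n)/n$, which lets us compare the exponent $n\log n/\Delta(n)$ at $n$ with its value at smaller arguments; (ii) the condition $\Delta(x) \leq (2-\gamma)\Delta(x/2)$ is a ``doubling with slack'' condition that guarantees $\Delta$ does not grow too slowly (e.g. it rules out $\Delta$ being logarithmic), which is what forces geometric decay of the recursion depth; concretely it should give something like $\Delta(n) \geq n^{\delta}$ for a $\delta>0$ depending on $\gamma$, or at least that $n/\Delta(n)$ shrinks by a constant factor when $n$ drops to $n - \Delta(n)$. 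I would isolate a clean inequality of the form: there is a constant $q = q(\gamma) < 1$ such that $\frac{(n-\lceil\Delta(n)\rceil)\log(n-\lceil\Delta(n)\rceil)}{\Delta(n-\lceil\Delta(n)\rceil)} \leq \frac{n\log n}{\Delta(n)} - 1$ for $n$ large, or more robustly an additive-gap statement that each branching step ``pays for itself'' in the exponent.

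Set $f(n) = n\log n/\Delta(n)$ and let the induction hypothesis be $T(n') \leq 2^{c'f(n')}(S(n')+1)$ for all $n' < n$. I would then bound the right-hand side of \eqref{eq:cpx} term by term. The additive term $2^{cn\log n/\Delta(n)} = 2^{cf(n)}$ is absorbed provided $c' \geq c$ (and using $S(n)+1 \geq 1$). For the $S(n)$ term inside the max, it is dominated by $2^{c'f(n)}(S(n)+1)$ trivially. For the term $T(n-1) + T(n-\lceil\Delta(n)\rceil)$: by monotonicity $T(n-1) \leq T(n - \lceil \Delta(n)\rceil) + (\text{correction})$ is not quite free, so instead I would handle $T(n-1) \leq 2^{c'f(n-1)}(S(n-1)+1) \leq 2^{c'f(n)}(S(n)+1)$ using that $f$ is (essentially) nondecreasing — here I need $f(n-1) \leq f(n)$, which follows because $f(x) = x\log x/\Delta(x)$ and $\Delta(x)/x$ is nonincreasing so $x/\Delta(x)$ is nondecreasing, and $\log x$ is increasing; so $T(n-1)$ alone already has the target form, and I must show the sum of the two terms still fits. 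This is where the slack $\gamma$ enters: $T(n-\lceil\Delta(n)\rceil) \leq 2^{c'f(n-\lceil\Delta(n)\rceil)}(S(n)+1)$ and I want $2^{c'f(n-1)} + 2^{c'f(n-\lceil\Delta(n)\rceil)} \leq 2^{c'f(n)}$; since $f(n-1) \leq f(n)$ I instead want roughly $2^{c'f(n)} \cdot (\text{something} < 1) + 2^{c'f(n)} \cdot (\text{something} < 1/2)$, so I actually need $f(n-1) \leq f(n) - 1/c'$, i.e. a strictly-increasing-with-a-gap property of $f$. That gap is exactly the content of the doubling condition, and making it precise (that shaving off $\lceil\Delta(n)\rceil$, or even just $1$, vertices decreases $f$ by at least a positive amount once $n$ is past a constant) is the crux.

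For the branching term $\max_{1\leq k\leq\lfloor n/\Delta(n)\rfloor} 2^k \cdot n \cdot T(n-\lceil k\Delta(n)\rceil)$, I apply the induction hypothesis to get $2^k \cdot n \cdot 2^{c'f(n-\lceil k\Delta(n)\rceil)}(S(n)+1)$ and must show $2^k n \cdot 2^{c'f(n-k\lceil\Delta(n)\rceil)} \leq 2^{c'f(n)}$, equivalently $c'f(n) - c'f(n-k\lceil\Delta(n)\rceil) \geq k + \log n$. Since $f(x) = x\log x/\Delta(x)$, removing a $k\Delta(n)$-sized chunk from $n$ should drop $f$ by at least $\Omega(k \log n / \text{(local slope)})$ — here the concavity giving $\Delta(n-k\Delta(n)) \leq \Delta(n)$ ensures the denominator does not blow up, while the numerator $x\log x$ drops by at least $k\Delta(n)\log(n/2)$ (say, while we are still above $n/2$), and $\Delta(n) \geq 1$ for $n\geq 1$, so $f(n)-f(n-k\lceil\Delta(n)\rceil) \gtrsim k\log(n/2)$, which for $c'$ large enough and $n$ past a constant exceeds $(k+\log n)/c'$. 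One has to be a little careful when $k\Delta(n)$ pushes the argument below $n/2$, but then the doubling condition on $\Delta$ caps how much $\Delta$ could have shrunk, and one can telescope in stages of halving; alternatively one proves the single-halving statement $f(n/2) \leq f(n) - \Omega(n/\Delta(n)) \cdot (\text{small})$ and iterates. Finally, for the base cases $n=1$ (and small $n$ up to the threshold where the asymptotic inequalities kick in), one just picks $c'$ large enough that $2^{c'f(n)}(S(n)+1) \geq c \geq T(n)$, using $T(1),S(1) \leq c$ and $f(n) \geq 1$ there (since $\Delta(n) \leq n$ gives $f(n) = n\log n/\Delta(n) \geq \log n$, so for $n \geq 2$ it is at least $1$; the $n=1$ case, where $\log 1 = 0$, is handled directly by the constant).

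The main obstacle I anticipate is nailing down the precise quantitative gap statement for $f(n) = n\log n/\Delta(n)$ — namely that one branching step buys at least one unit (times $k$, plus the $\log n$ overhead) in the exponent — uniformly in $n$ and in the branching parameter $k$, given only concavity, $\Delta(x)\leq x$, $\Delta(0)=0$, and the doubling-with-slack condition. The concavity controls $\Delta$ from above after a decrease (so the denominator is safe) and the doubling condition controls it from below (so $\Delta$ is polynomially large, giving room), but interleaving these two — especially across a halving of the argument — and choosing the final constant $c'$ to simultaneously absorb the additive $2^{cf(n)}$ term, the two-way split, and the branching sum, together with cleanly dispatching the finitely many small $n$, is the delicate bookkeeping that the proof will have to carry out carefully.
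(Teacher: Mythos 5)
Your plan is a direct induction with the ansatz $T(n)\le 2^{c'f(n)}(S(n)+1)$ for $f(n)=n\log n/\Delta(n)$, absorbing each branch of \eqref{eq:cpx} by an ``exponent gap'' of $f$; the paper instead replaces the $\max$ by a sum, unrolls the $T(n-1)$ chain (paying a factor $n$ per $\Delta$-jump), and bounds the number of $\Delta$-jumps by $\Oh(n/\Delta(n))$ via the doubling-with-slack condition, giving $(2n)^{\Oh(n/\Delta(n))}\bigl(S(n)+2^{cf(n)}\bigr)$. The difference matters, because the step you yourself flag as the crux is not available: the hypotheses do \emph{not} give the gap property you need, and for the term $T(n-1)$ it is simply false. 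Concretely, you need $f(n-1)\le f(n)-1/c'$, but for the admissible $\Delta(x)=\sqrt{x}$ one has $f(n)-f(n-1)=\Theta(\log n/\sqrt n)\to 0$ (and in general $f(n)-f(n-1)$ can be as small as $\approx 1/\Delta(n)$), so $2^{c'f(n-1)}=(1-o(1))\cdot 2^{c'f(n)}$ and the leftover slack is only about $2^{c'f(n)}\cdot c'/\Delta(n)$; the doubling condition controls $\Delta$ across a \emph{halving} of the argument, not across a single unit (or single $\Delta$-sized) decrement, so it cannot supply a per-step gap.

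The absorption of the jump terms is also not justified under the stated hypotheses. Your estimate ``the numerator drops by $k\Delta(n)\log(n/2)$ and $\Delta(n-k\Delta(n))\le\Delta(n)$ keeps the denominator safe'' uses the inequality in the wrong direction: $\Delta(m)\le\Delta(n)$ \emph{lower}-bounds $m\log m/\Delta(m)$, whereas you need an upper bound on it. The correct general tool is the monotonicity of $x/\Delta(x)$, which yields only $f(n)-f(n-\lceil k\Delta(n)\rceil)\ \ge\ \tfrac{n}{\Delta(n)}\log\tfrac{n}{n-k\Delta(n)}=\Omega(k)$, with no $\log n$ factor. Indeed one can construct admissible $\Delta$ (concave, nondecreasing, $\Delta(0)=0$, $\Delta(x)\le x$, doubling-with-slack) containing segments that lie on lines through the origin and span a full $\Delta$-step; on such a step $x/\Delta(x)$ is constant, the jump gap is only $\approx\log_2 e$, and then $2^{c'f(n-\lceil\Delta(n)\rceil)}\approx 2^{c'f(n)}\cdot 2^{-\Theta(c')}$ vastly exceeds the available slack $\approx 2^{c'f(n)}\cdot c'/\Delta(n)$ (with $\Delta(n)$ polynomially large), so the inductive inequality $2^{c'f(n-1)}+2^{c'f(n-\lceil\Delta(n)\rceil)}+2^{cf(n)}\le 2^{c'f(n)}$ fails for every fixed $c'$ -- even though the lemma remains true for such $\Delta$. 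Your scheme would go through for the specific power-law-type $\Delta$ used later in the paper (there the jump gap really is $\Theta(k\log n)$), but as a proof of the lemma as stated it has a genuine hole; the repair is essentially to stop absorbing $T(n-1)$ pointwise and instead unroll the unit-decrement chain and count $\Delta$-jumps, i.e., the paper's argument (its Claim~\ref{cl:Pt-rec} is exactly the place where the doubling condition is legitimately used).
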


We will use Lemma~\ref{lem:cpx} as a shortcut to argue about time complexities of our branching algorithms; let us now briefly explain its intuition.
The function $T(n)$ will be the running time bound of the discussed algorithm.
The term $2^{cn\log n / \Delta(n)}$ in~\eqref{eq:cpx} corresponds to a processing time at a single step of the algorithm; note that this is at least polynomial in $n$ as $\Delta(n) \leq n$. 
The terms in the $\max$ in~\eqref{eq:cpx} are different branching options chosen by the algorithm.
The first one, $S(n)$, is a subcall to a different procedure, such as bounded treewidth subroutine.
The second one, $T(n) + T(n-\lceil \Delta(n) \rceil)$, corresponds to a two-way branching on a single vertex of degree at least $\Delta(n)$.
The last one corresponds to an exhaustive branching on a set $X \subseteq V(G)$ of size $k$, such that every connected component of $G-X$ has at most $n-k\Delta(n)$ vertices.

\begin{proof}[Proof of Lemma~\ref{lem:cpx}]
For notational convenience, it will be easier to assume that the functions $S$ and $T$ is defined on the whole half-line $\R_{\geq 0}$ with $S(x) = S(\lfloor x \rfloor)$ and $T(x) = T(\lfloor x \rfloor)$.

First, let us replace $\max$ with addition in the assumed inequality. After some simplifications, this leads to the following.
\begin{equation}\label{eq:sum1}
T(n) \leq T(n-1) + S(n) + 2^{cn \log n / \Delta(n)} + 2n \cdot \sum_{k=1}^{\lfloor \frac{n}{\Delta(n)} \rfloor} 2^k \cdot T(n- k \Delta(n)).
\end{equation}
From the concavity of $\Delta(n)$ it follows that
\[n - i - \Delta(n-i) \leq n - \Delta(n).\]
Furthermore, the assumptions on $\Delta$, namely the fact that $\Delta$ is nondecreasing, concave, with $\Delta(0) = 0$, implies that for any $0 < y < x$ we have
\[\frac{y}{x} \Delta(x) \geq \Delta(x) - \Delta(x-y).\]
After simple algebraic manipulation, this is equivalent to
\[\frac{x}{\Delta(x)} \geq \frac{x-y}{\Delta(x-y)}.\]
That is, $x \mapsto x/\Delta(x)$ is a nondecreasing function.

Using the fact that $S(n)$ and $T(n)$ are nondecreasing and the facts above, we iteratively apply~\eqref{eq:sum1} $n$ times to the first summand,
obtaining the following.
\begin{equation}\label{eq:sum2}
T(n) \leq n \cdot \left(S(n) + 2^{cn \log n / \Delta(n)} + 2n \cdot \sum_{k=1}^{\lfloor \frac{n}{\Delta(n)} \rfloor} 2^k \cdot T(n- k \Delta(n))\right).
\end{equation}

We now show the following.
\begin{claim}\label{cl:Pt-rec}
Consider a sequence $n_0 = n$ and $n_{i+1} = n_i - \Delta(n_i)$.
Then $n_i = \Oh(1)$ for $i = \Oh(n / \Delta(n))$.
Here, the big-$\Oh$-notation hides constants depending on $\gamma$.
\end{claim}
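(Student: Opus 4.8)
To prove the claim, I would follow the sequence $(n_i)$ through geometric scales. Since $\Delta \ge 0$, the sequence is nonincreasing, so once some $n_i$ drops below a fixed absolute constant we are finished, and the task reduces to bounding how many steps that takes. A naive approach — using only that $x\mapsto x/\Delta(x)$ is nondecreasing (established just above), which gives $\Delta(n_i)\ge n_i\Delta(n)/n$ and hence the geometric decay $n_{i+1}\le n_i(1-\Delta(n)/n)$, whence $n_i\le n(1-\Delta(n)/n)^i$ — yields only the weaker bound $i = O\!\big((n\log n)/\Delta(n)\big)$. Shaving the logarithmic factor is exactly where the doubling hypothesis $\Delta(x)\le(2-\gamma)\Delta(x/2)$ enters, and the clean way to use it is a dyadic phase decomposition.

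Concretely, I would classify each step $i\to i+1$ by the dyadic interval containing $n_i$: call step $i$ a \emph{phase-$j$ step} if $n_i\in(n/2^{j+1},\,n/2^j]$; monotonicity makes the phases consecutive. Two facts then control a single phase. First, since $\Delta$ is nondecreasing, every phase-$j$ step decreases $n_i$ by at least $\Delta(n/2^{j+1})$. Second, iterating $\Delta(x)\le(2-\gamma)\Delta(x/2)$ along $x=n, n/2, \dots, n/2^{j}$ — which is legitimate as long as $n/2^{j}\ge 2$ — gives $\Delta(n)\le(2-\gamma)^{j+1}\Delta(n/2^{j+1})$, i.e. $\Delta(n/2^{j+1})\ge(2-\gamma)^{-(j+1)}\Delta(n)$.

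The counting then goes as follows. If phase $j$ is nonempty it occupies consecutive indices $a,\dots,b$ with every $n_a,\dots,n_b$ in $(n/2^{j+1},n/2^j]$, so the total decrease over those $b-a+1$ steps is $n_a-n_{b+1}\le n/2^j$, while each of these steps contributes at least $\Delta(n/2^{j+1})\ge(2-\gamma)^{-(j+1)}\Delta(n)$. Hence phase $j$ uses at most
\[
\frac{n/2^{j}}{(2-\gamma)^{-(j+1)}\Delta(n)}\;=\;\frac{n}{\Delta(n)}\cdot(2-\gamma)\Bigl(\tfrac{2-\gamma}{2}\Bigr)^{j}
\]
steps, and summing this geometric series over all phases $j$ with $n/2^{j}\ge 2$ gives at most $\frac{2(2-\gamma)}{\gamma}\cdot\frac{n}{\Delta(n)}$ steps in total; after that $n_i<2$, and since $n_i$ never increases it stays below $2$ forever. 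This is $n_i=O(1)$ for $i=O(n/\Delta(n))$ with the hidden constants ($\tfrac{2(2-\gamma)}{\gamma}$ and $2$) depending only on $\gamma$, as required; instances with $n<2$ are trivial.

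The point I would watch most carefully is not the arithmetic but a degeneracy buried in the hypotheses: the argument (and indeed the statement) needs $\Delta(n)>0$. This is automatic, since a concave nondecreasing function with $\Delta(0)=0$ that vanishes at some $x_0>0$ must vanish on all of $[x_0,\infty)$ — from concavity, $\Delta(x_0)\ge\tfrac{x_0}{x}\Delta(x)$ for $x>x_0$ — and hence on all of $\R_{\ge 0}$; as $\Delta(n)\ge 1$ in every application we may assume $\Delta\not\equiv 0$, so $\Delta(x)>0$ for all $x>0$ and every denominator above is a positive real. The remaining bookkeeping — that the iterated doubling is valid only while $n/2^j\ge 2$ — is harmless: one simply truncates the phase decomposition at the first $j$ with $n/2^j<2$, by which point $n_i$ is already $O(1)$.
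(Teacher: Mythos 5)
Your proof is correct and follows essentially the same route as the paper: decompose the descent into dyadic scales $(n/2^{j+1},n/2^j]$, bound the number of steps at scale $j$ by roughly $(n/2^j)/\Delta(n/2^j)$, and use the hypothesis $\Delta(x)\le(2-\gamma)\Delta(x/2)$ to sum these bounds as a geometric series, giving $\Oh(n/\Delta(n))$ total steps. Your write-up is merely more explicit (consecutive phases, iterated doubling, and the positivity remark about $\Delta$) than the paper's terser version of the same argument.
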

\begin{proof}
By the concavity of $\Delta$ we have $\Delta(n'/2) \geq \Delta(n')/2$, thus
as long as $n_i > n_0/2$ we have that $n_{i+1} \leq n_i - \Delta(n)/2$. 
Consequently, for some $j = \Oh(n / \Delta(n))$ we have $n_j < n_0 / 2$.
We infer that we obtain $n_i = \Oh(1)$ at position
\[i = \Oh\left( \frac{n}{\Delta(n)} + \frac{n/2}{\Delta(n/2)} + \frac{n/4}{\Delta(n/4)} + \ldots \right).\]
By the assumption that $\Delta(x) \leq \Delta(x/2) \cdot (2-\gamma)$ for some constant $\gamma > 0$ and every $x \geq 2$, the sum above can be bounded by a geometric sequence,
yielding $i = \Oh(n/\Delta(n))$.
\cqed\end{proof}
The above claim implies that if we iteratively apply~\eqref{eq:sum2} to itself, we obtain
\[T(n) \leq (2n)^{\Oh(n / \Delta(n))} \cdot \left(S(n) + 2^{cn \log n / \Delta(n)}\right).\]
This finishes the proof of the lemma.
\end{proof}

\section{Gy\'{a}rf\'{a}s' path-growing argument}
\label{sec:gyarfas}
The main (technical but useful) result of this section is the following adaptation
of  Gy\'{a}rf\'{a}s' proof that $P_t$-free graphs are $\chi$-bounded \cite{gyarfas}.

\begin{lemma}\label{lem:path-argument}
Let $t \geq 2$ be an integer, $G$ be a connected graph with a distinguished vertex $v_0 \in V(G)$
and maximum degree at most $\Delta$,
such that $G$ does not contain an induced path $P_t$ with one endpoint in $v_0$.
Then, for every weight function $w : V(G) \to \Z_{\geq 0}$, there exists a set
$X \subseteq V(G)$ of size at most $(t-1)\Delta +1$ such that every connected
component $C$ of $G-X$ satisfies $w(C) \leq w(V(G))/2$.
Furthermore, such a set $X$ can be found in polynomial time.
\end{lemma}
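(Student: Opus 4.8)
The plan is to adapt Gy\'arf\'as' path-growing argument and to take $X$ to be the union of the closed neighborhoods of the vertices of a carefully grown induced path $P$ starting at $v_0$. The size bound will then come for free: since $G$ contains no induced $P_t$ with an endpoint at $v_0$, any induced path $P=(p_0=v_0,p_1,\dots,p_\ell)$ has at most $t-1$ vertices, i.e.\ $\ell\le t-2$; and $|N_G[V(P)]|\le (\Delta+1)+\ell\Delta$, because $N_G[p_0]$ has at most $\Delta+1$ vertices and, for each $j\ge 1$, the vertex $p_{j-1}\in N_G[p_j]$ is already counted, so $N_G[p_j]$ contributes at most $\Delta$ new vertices. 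Hence $|X|\le (\Delta+1)+(t-2)\Delta=(t-1)\Delta+1$, and it only remains to grow the path so that $G-N_G[V(P)]$ has all components of weight at most $w(V(G))/2$.

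Now for the greedy construction. Maintain an induced path $P^{(i)}=(p_0,\dots,p_i)$ with $p_0=v_0$ and put $H_i:=G-N_G[V(P^{(i)})]$. If every component of $H_i$ has weight at most $w(V(G))/2$, stop and output $X:=N_G[V(P^{(i)})]$; otherwise let $D_i$ be the unique component of $H_i$ with $w(D_i)>w(V(G))/2$. To extend: for $i=0$, since $G$ is connected and $D_0$ is a proper component of $G-N_G[v_0]$, it has a neighbor in $N_G(v_0)$, which we take as $p_1$. For $i\ge 1$, the key structural fact is that $D_i\subsetneq D_{i-1}$ and, more precisely, $D_i$ is a connected component of $G[D_{i-1}]-(N_G(p_i)\cap D_{i-1})$: indeed $H_i\subseteq H_{i-1}$ forces the heavy component $D_i$ to sit inside the heavy component $D_{i-1}$, while $D_{i-1}$ is disjoint from $N_G[p_0],\dots,N_G[p_{i-1}]$ and $p_i\notin D_{i-1}$, so passing from $H_{i-1}$ to $H_i$ deletes from $D_{i-1}$ exactly the set $N_G(p_i)\cap D_{i-1}$. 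Since $D_{i-1}$ is connected and $D_i$ a proper nonempty subset, the boundary $N_{G[D_{i-1}]}(D_i)$ is nonempty and is contained in $N_G(p_i)\cap D_{i-1}$; take any vertex of it as $p_{i+1}$.

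It remains to check, by induction, the two invariants: (i) $(p_0,\dots,p_{i+1})$ is again an induced path, since $p_{i+1}$ is adjacent to $p_i$ and lies in $D_{i-1}$, which is disjoint from $N_G[p_0],\dots,N_G[p_{i-1}]$, so $p_{i+1}$ is distinct from and nonadjacent to $p_0,\dots,p_{i-1}$; and (ii) $p_{i+1}$ is adjacent to $D_i$ but lies outside $D_i$, which is precisely what the next step uses (for $i=0$ these read: $p_1$ adjacent to $D_0$, $p_1\notin D_0$, which hold by construction). Because the induced path grows by one vertex whenever the procedure does not stop, and it can never exceed $t-1$ vertices, the procedure stops after at most $t-2$ extensions; each step only computes connected components of $H_i$, identifies the heavy one, and picks one boundary vertex, so the whole procedure runs in polynomial time. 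The one delicate point -- the reason the invariant must be ``phased back'' by one index -- is obstacle (ii): one cannot ask $p_i$ itself to be adjacent to $D_i$, since by definition no vertex of $V(P^{(i)})$ touches $D_i$; the resolution is exactly the nesting lemma above, that the heavy component strictly shrinks and remains a component of $G[D_{i-1}]$ together with the newly deleted set $N_G(p_i)\cap D_{i-1}$, so its boundary within $D_{i-1}$ is nonempty and sits inside $N_G(p_i)$. Once this is in place, everything else is bookkeeping.
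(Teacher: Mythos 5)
Your proposal is correct and follows essentially the same route as the paper: grow an induced path from $v_0$ by always stepping into the previous heavy component via a neighbor of the current heavy component, take $X$ to be the closed neighborhood of the path, and bound its size by $(\Delta+1)+(t-2)\Delta$ using $P_t$-freeness from $v_0$. Your "phased-back" nesting invariant ($D_i$ a component of $G[D_{i-1}]-(N_G(p_i)\cap D_{i-1})$, with boundary inside $N_G(p_i)$) is exactly the paper's invariant that $v_i\in N(B_i)$ and $B_{i+1}$ is the big component of $G[B_i\setminus N_G(v_i)]$, just spelled out in more detail.
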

\begin{proof}
In what follows, a connected component $C$ of an induced subgraph $H$ of $G$ is \emph{big} if 
$w(C) > w(V(G))/2$.
Note that there can be at most one big connected component in any induced subgraph of $G$.

If $G-\{v_0\}$ does not contain a big component, we can set $X=\{v_0\}$. Otherwise, 
let $A_0 = \{v_0\}$ and $B_0$ be the big component of $G-A_0$. As $G$ is connected, every component of $G-A_0$ is adjacent to $A_0$, thus $v_0\in N(B_0)$ holds.
We will inductively define vertices $v_1,v_2,v_3,\ldots$ such that $v_0,v_1,v_2,\ldots$ induce
a path in $G$.

Given vertices $v_0,v_1,v_2,\ldots,v_i$, we define sets $A_{i+1}$ and $B_{i+1}$ as follows.
We set $A_{i+1} = N_G[v_0,v_1,\ldots,v_i]$.
If $G-A_{i+1}$ does not contain a big connected component, we stop the construction.
Otherwise, we set $B_{i+1}$ to be the big connected component of $G-A_{i+1}$.
During the process we maintain the invariant that $B_i$ is the big component of $G-A_i$
and that $v_i \in N(B_i)$. Note that this is true for $i=0$ by the choice of $A_0$ and $B_0$.

It remains to show how to choose $v_{i+1}$, given vertices $v_0,v_1,\ldots,v_i$
and sets $A_{i+1}$ and $B_{i+1}$. Note that $A_{i+1} = A_i \cup N_G[v_i]$ and $v_i \in N(B_i)$, so $B_{i+1}$ is the big connected component of $G[(B_i \setminus N_G(v_i))]$.
Consequently, we can choose some $v_{i+1} \in B_i \cap N_G(B_{i+1}) \cap N_G(v_i)$ that satisfies
all the desired properties.

Since $G$ does not contain an induced $P_t$ with one endpoint in $v_0$, 
the aforementioned process stops after defining a set $A_{i+1}$ for some $i < t-1$,
    when $G-A_{i+1}$ does not contain a big component.
    Observe that
\[|A_{i+1}| \leq (\Delta+1) + i \cdot \Delta = (i+1) \Delta + 1 \leq (t-1)\Delta + 1.\]
Consequently, the set $X := A_{i+1}$ satisfies the desired properties.

For the algorithmic claim, note that the entire proof can be made algorithmic 
in a straightforward manner.
\end{proof}
It is well known that if graph $G$ has a set $X$ of size $k$ for every weight function $w:V(G)\to \Z_{\geq 0}$ such that every connected component $C$ of $G-X$ satisfies $w(C)\le w(V(G))/2$, then $G$ has treewidth $\Oh(w)$ (see, e.g., \cite[Theorem 11.17(2)]{FG}). Thus Lemma~\ref{lem:path-argument} implies a treewidth bound of $\Oh(t\Delta)$. Algorithmically, it is also a standard consequence of Lemma~\ref{lem:path-argument} that a tree decomposition of width $\Oh(t\Delta)$ can be obtained in polynomial time. What needs to be observed is that standard 4-approximation algorithms for treewidth, which run in time exponential in treewidth, can be made to run in polynomial time if we are given a polynomial-time subroutine for finding the separator $X$ as in Lemma~\ref{lem:path-argument}.
For completeness, we sketch the proof here.

\begin{corollary}\label{cor:path-argument}
A $P_t$-free graph with maximum degree $\Delta$ has treewidth $\Oh(t\Delta)$.
Furthermore, a tree decomposition of this width can be computed in polynomial time.
\end{corollary}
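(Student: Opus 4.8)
The plan is to derive Corollary~\ref{cor:path-argument} from Lemma~\ref{lem:path-argument} by plugging in the parameter bound $k = (t-1)\Delta + 1$ into the standard ``balanced separator implies bounded treewidth'' machinery. The first step is to recall the classical equivalence: if a graph $G$ admits, for every weight function $w : V(G) \to \Z_{\geq 0}$, a set $X$ with $|X| \leq k$ such that every component of $G - X$ has weight at most $w(V(G))/2$, then $\treewidth(G) = \Oh(k)$; Lemma~\ref{lem:path-argument} supplies exactly such an $X$ of size $(t-1)\Delta + 1 = \Oh(t\Delta)$ for a connected $P_t$-free graph (taking $v_0$ to be an arbitrary vertex, since an induced $P_t$ anywhere in the graph certainly yields one with a chosen endpoint only in special cases --- so more carefully, one applies the lemma after observing that $P_t$-freeness implies the graph has no induced path on $t$ vertices at all, hence in particular none with a given endpoint). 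For disconnected $G$ we simply take the maximum of the treewidths over components, which does not increase the bound.

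Next I would make the treewidth bound algorithmic. The point, as the surrounding text already flags, is that the usual constant-factor (say $4$-) approximation algorithms for treewidth --- for instance the Robertson--Seymour-style recursive decomposition, or the cleaner formulation in the cited reference --- proceed by repeatedly finding balanced separators of the ``current'' bag/subgraph with respect to a weight function that assigns weight $1$ to the boundary vertices that must be distributed. Each such separator is guaranteed to exist of size $\Oh(k)$ precisely by the same weighted-separator hypothesis, and these algorithms are exponential in $k$ only because they search for the separator by brute force. Here, however, Lemma~\ref{lem:path-argument} gives us a \emph{polynomial-time} subroutine that produces a separator of size $\Oh(t\Delta)$ for any prescribed weight function. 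Substituting this subroutine for the brute-force search turns the whole decomposition procedure into a polynomial-time algorithm that outputs a tree decomposition of width $\Oh(t\Delta)$.

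Concretely, I would describe the recursion as follows: maintain a partial tree decomposition together with a current subgraph $H$ and a set $W \subseteq V(H)$ (the ``interface'' vertices, $|W| = \Oh(t\Delta)$) that must all lie in the bag at the current node; apply Lemma~\ref{lem:path-argument} to $H$ with the weight function that is $1$ on $W$ and $0$ elsewhere to obtain a separator $X$ with $|X| = \Oh(t\Delta)$ splitting $W$ roughly in half among the components of $H - X$; create a bag $W \cup X$, and recurse on each component $C$ of $H - X$ with new interface $(W \cap V(C)) \cup (X \cap N_H(C))$, which again has size $\Oh(t\Delta)$ and strictly smaller $|W \cap V(C)| \leq |W|/2$ contribution, so the recursion terminates after $\Oh(\log n)$ levels of halving interleaved with the component split, giving a polynomial total running time and bags of size $\Oh(t\Delta)$. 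Verifying that the resulting tree satisfies the connectivity and edge-coverage axioms of a tree decomposition is routine and identical to the standard argument.

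The only genuine obstacle is a bookkeeping one rather than a conceptual one: one must check that the interface sets stay of size $\Oh(t\Delta)$ throughout the recursion (they do, because each step adds at most $|X| = \Oh(t\Delta)$ new vertices while the ``weighted'' part $W$ only shrinks), and that the weight function passed to Lemma~\ref{lem:path-argument} at each node is legitimate (nonnegative integer-valued --- it is, being $0/1$). Since the paper explicitly says it only wants to \emph{sketch} this, I would keep the exposition at the level above, citing \cite{FG} for the non-algorithmic implication and noting that the algorithmic version is obtained by the verbatim substitution of the polynomial-time separator subroutine into the standard approximation scheme.
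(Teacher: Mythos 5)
Your first half (the purely combinatorial bound via the ``balanced $w$-separator for every weight function implies treewidth $\Oh(k)$'' equivalence, plus the observation that $P_t$-freeness lets you take any $v_0$) matches the paper's remarks and is fine. The genuine gap is in the algorithmic half, and it is exactly the point the paper's proof is careful about: your explicit recursion has no valid progress measure. You claim termination because $|W \cap V(C)| \leq |W|/2$ gives ``$\Oh(\log n)$ levels of halving,'' but the interface is replenished by the separator vertices at every level (the new interface is $(W \cap V(C)) \cup (X \cap N_H(C))$, of size up to $|W|/2 + |X|$), so nothing halves across levels; and the subproblem graph $H[V(C) \cup (X \cap N_H(C))]$ need not shrink at all when $H - X$ is connected and $X \subseteq N_H(C)$. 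Concretely, take $H = C_5$ (which is $P_6$-free) with interface $W = \{a\}$: Lemma~\ref{lem:path-argument} applied with $v_0 = a$ legitimately returns $X = \{a\}$, $H - X$ is a single component $C$ with $X \subseteq N_H(C)$, and your recursion produces the child instance $(C_5, \{a\})$ again --- an infinite loop. So the recursion as written is not an algorithm, let alone a polynomial-time one.

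The paper avoids this by not building the decomposition from scratch: it plugs the separator subroutine into the specific $4$-approximation scheme of \cite[Section~7.6]{DBLP:books/sp/CyganFKLMPPS15}, whose progress hinges on returning a set $\widehat{S}$ that \emph{strictly} contains the current boundary $S$. The nontrivial verification there is that the set $X$ from Lemma~\ref{lem:path-argument} (applied with weights $1$ on $S$, $|S| = 3k+4$) cannot lie entirely inside $S$, because $G[W \setminus S]$ is connected and $(3k/2+2)+(k+1) < 3k+4$; this is what guarantees strict progress and hence polynomial running time. Your closing sentence about ``verbatim substitution of the polynomial-time separator subroutine into the standard approximation scheme'' points in the right direction, but the substitution is not quite verbatim --- one must check both the strictness $X \not\subseteq S$ and that components of $G[W \setminus (S \cup X)]$ see at most $3k+4$ boundary vertices --- and the concrete scheme you actually spelled out in its place fails. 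To repair your version you would need an explicit progress guarantee, e.g.\ forcing the separator to contain a vertex outside the current interface, or interleaving steps with uniform weights on all of $V(H)$ so that component sizes (not just interface sizes) drop geometrically.
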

\begin{proof}
We follow standard constant approximation algorithm for treewidth, as described
in~\cite[Section~7.6]{DBLP:books/sp/CyganFKLMPPS15}. This algorithm, given a graph $G$ and an integer $k$,
either correctly concludes that $\treewidth(G) > k$ or computes a tree decomposition
of $G$ of width at most $4k+4$. 

Let $G$ be a $P_t$-free graph with maximum degree at most $\Delta$. We may assume that $G$ is connected, otherwise we can handle the connected components separately. Let us start by setting
$k := (t-1)\Delta$ so that any application of
Lemma~\ref{lem:path-argument} gives a set of size at most $k+1$.

The only step of the algorithm that runs in exponential time is the following.
We are
given an induced subgraph $G[W]$ of $G$ and a set $S \subseteq W$ with the following properties:
\begin{enumerate}
\item $|S| \leq 3k+4$ and $W \setminus S \neq \emptyset$;
\item both $G[W]$ and $G[W \setminus S]$ are connected;
\item $S = N_G(W \setminus S)$.
\end{enumerate}
The goal is to compute a set $S \subsetneq \widehat{S} \subseteq W$ such that
$|\widehat{S}| \leq 4k+5$ and every connected component of $G[W \setminus \widehat{S}]$
is adjacent to at most $3k+4$ vertices of $\widehat{S}$.

The construction of $\widehat{S}$ is trivial for $|S| < 3k+4$, as we can take $\widehat{S} = S \cup \{v\}$
for an arbitrary $v \in W \setminus S$. The crucial step happens for sets $S$ of size
exactly $3k+4$. Instead of the exponential search of~\cite[Section~7.6]{DBLP:books/sp/CyganFKLMPPS15},
we invoke Lemma~\ref{lem:path-argument} on the graph $G[W]$ and 
a function $w:W \to \{0,1\}$ that puts
$w(v) = 1$ if and only if $v \in S$. The lemma returns a set $X \subseteq W$
of size at most $k+1$
such that every connected component $C$ of $G[W \setminus X]$ contains at most $3k/2+2$ vertices
of $S$. Since $G[W \setminus S]$ is connected and $(3k/2+2) + (k+1) < 3k+4$, we cannot
have $X \subseteq S$. Consequently, $\widehat{S} := S \cup X$ satisfies all the requirements.

The algorithm of~\cite[Section~7.6]{DBLP:books/sp/CyganFKLMPPS15} returns that $\treewidth(G) > k$ only if at some
step it encounters pair $(W,S)$ for which it cannot construct the set $\widehat{S}$.
However, our method of constructing $\widehat{S}$ works for every choice of $(W,S)$,
  and executes in polynomial time.
Consequently, the modified algorithm of~\cite[Section~7.6]{DBLP:books/sp/CyganFKLMPPS15} always computes
a tree decomposition
of width at most $4k+4 = \Oh(t\Delta)$ in polynomial time, as desired.
\end{proof}

\section{Subexponential algorithms based on the path-growing argument}
\label{sec:mis-alg}

The goal of this section is to use Corollary~\ref{cl:Pt-rec} to prove Theorems~\ref{thm:mainMIS} and \ref{thm:Fdt-free} stated in the Introduction. 
\subsection{\textsc{Independent Set} on graphs without long paths}

We first prove the following statement, which implies Theorem~\ref{thm:mainMIS}.

\begin{theorem}\label{thm:Pt-free}
The \mwishyp{} problem on an $n$-vertex $P_t$-free graph can be solved
in time $2^{\Oh(\sqrt{tn\log n})}$.
\end{theorem}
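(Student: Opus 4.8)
\textbf{Proof proposal for Theorem~\ref{thm:Pt-free}.}

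The plan is to design a recursive branching algorithm whose behavior fits the recursion of Lemma~\ref{lem:cpx} with $\Delta(n) = \sqrt{tn/\log n}$ (up to constants). The intuition is a win/win dichotomy driven by the maximum degree: either the graph has a high-degree vertex, which we exploit by branching on it (and on its neighborhood) to make rapid progress, or the maximum degree is small, in which case Corollary~\ref{cor:path-argument} tells us the treewidth is $\Oh(t\Delta)$ and we can solve \mwis{} directly by dynamic programming over a tree decomposition in time $2^{\Oh(t\Delta)}\cdot n^{\Oh(1)}$. Setting the degree threshold $\Delta = \Delta(n) \approx \sqrt{tn/\log n}$ balances the two sides: the treewidth-based subroutine runs in $2^{\Oh(t\Delta)} = 2^{\Oh(\sqrt{tn\log n})}$ time (this is the $S(n)$ term of Lemma~\ref{lem:cpx}), while the branching recursion with recursive calls on instances smaller by roughly $\Delta(n)$ vertices telescopes, by the lemma, to $2^{\Oh(n\log n/\Delta(n))}\cdot (S(n)+1) = 2^{\Oh(\sqrt{tn\log n})}$.

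More concretely, I would proceed as follows. First, reduce to connected graphs: \mwis{} on a disconnected graph is the sum over components, and $P_t$-freeness is inherited, so we may assume $G$ is connected; moreover each branching call should re-split into connected components. Second, fix the threshold $\Delta = \Delta(n)$; if $\DD(G) < \Delta$, invoke Corollary~\ref{cor:path-argument} to get a width-$\Oh(t\Delta)$ tree decomposition in polynomial time and run the standard \mwis{} DP on it, contributing the $S(n)$ term. Third, if there is a vertex $v$ with $\deg(v) \geq \Delta$, branch on $v$: either $v$ is in the solution (recurse on $G - N[v]$, which has at most $n - \Delta - 1$ vertices and weight gain $w(v)$) or it is not (recurse on $G - v$). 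This is exactly the ``two-way branching on a single vertex of degree at least $\Delta(n)$'' term $T(n-1) + T(n-\lceil\Delta(n)\rceil)$ in~\eqref{eq:cpx}. To also be able to use the ``exhaustive branching on a set $X$ of size $k$'' term — which is what actually drives progress when we want components of $G-X$ to shrink — I would, in the high-degree case, instead use Lemma~\ref{lem:path-argument}: applied with the all-ones weight function on the connected graph $G$ (of max degree possibly large, but we only invoke it after the degree test, so here we apply it with $\Delta$ replaced by $\DD(G)$, or better, iterate the single-vertex branching). The cleanest route is: if $\DD(G)\ge \Delta$, pick a high-degree vertex and do the two-way branch; this alone suffices to fit~\eqref{eq:cpx}, since after $\Oh(n/\Delta(n))$ levels (Claim~\ref{cl:Pt-rec}) the instance size drops below the threshold and we fall into the treewidth case. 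One must check that $\Delta(n)=\sqrt{tn/\log n}$ (clamped to lie in $[1,n]$ and made concave/nondecreasing) satisfies the hypotheses of Lemma~\ref{lem:cpx}: $\Delta(0)=0$, $\Delta(x)\le x$ for $x\ge1$ once $t\le n/\log n$ (small $n$ handled by the base case / brute force), and $\Delta(x)\le\sqrt2\,\Delta(x/2)$, i.e. $\gamma = 2-\sqrt2>0$.

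The main obstacle is bookkeeping the running-time recursion so that it genuinely matches the form~\eqref{eq:cpx}, in particular handling the interaction between branching and the re-decomposition into connected components (a branch can shatter $G$ into many components, and one must argue the total work is still governed by $T$ evaluated at the sizes, using that $T$ is superadditive-enough / nondecreasing), and choosing $\Delta(n)$ precisely so that both the $2^{cn\log n/\Delta(n)}$ per-node cost, the $S(n) = 2^{\Oh(t\Delta(n))}$ subroutine cost, and the depth $\Oh(n/\Delta(n))$ of branching all land at $2^{\Oh(\sqrt{tn\log n})}$. A secondary technical point is that Corollary~\ref{cor:path-argument} as stated treats $t$ as hidden in the $\Oh$, so for the explicit $\sqrt{tn\log n}$ bound I need the treewidth bound with the constant in front of $t\Delta$ made explicit (which is immediate from Lemma~\ref{lem:path-argument}: $X$ has size at most $(t-1)\Delta+1$, giving treewidth $\Oh(t\Delta)$ with an absolute constant), and the \mwis{} DP over treewidth-$w$ graphs running in $2^{\Oh(w)}n^{\Oh(1)}$ time. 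Once these are pinned down, plugging into Lemma~\ref{lem:cpx} finishes the proof, and Theorem~\ref{thm:mainMIS} follows by taking $t$ constant.
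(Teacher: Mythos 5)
Your overall strategy is exactly the paper's: fix a degree threshold $\Delta(n)$, do the two-way branch on any vertex of degree at least $\Delta(n)$ (fitting the $T(n-1)+T(n-\lceil\Delta(n)\rceil)$ term of~\eqref{eq:cpx}), and when the maximum degree falls below the threshold invoke Corollary~\ref{cor:path-argument} plus the standard treewidth DP as the $S(n)$ term of Lemma~\ref{lem:cpx}. Your side remarks are also consistent with the paper: the detour through Lemma~\ref{lem:path-argument} and the exhaustive-branching term is unnecessary, and the component-splitting bookkeeping is a non-issue since $T$ is nondecreasing and the simple two-way branch already fits~\eqref{eq:cpx}.

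However, there is a concrete quantitative error: your threshold $\Delta(n)=\sqrt{tn/\log n}$ does not balance the two sides and does not yield the claimed bound. With this choice, Lemma~\ref{lem:cpx} gives a branching factor of $2^{\Oh(n\log n/\Delta(n))}=2^{\Oh(\sqrt{n\log^3 n/t})}$, while the treewidth subroutine costs $S(n)=2^{\Oh(t\Delta(n))}=2^{\Oh(\sqrt{t^3 n/\log n})}$; neither equals $2^{\Oh(\sqrt{tn\log n})}$ unless $t=\Theta(\log n)$, and for fixed $t$ the total is $2^{\Oh(\sqrt{n}\log^{3/2}n)}$, which misses the stated bound (and Theorem~\ref{thm:mainMIS}) by a $\log n$ factor in the exponent. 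Balancing $t\Delta(n)$ against $n\log n/\Delta(n)$ forces $\Delta(n)=\sqrt{n\log n/t}$ --- the paper's choice --- under which both the branching factor and $S(n)$ become $2^{\Oh(\sqrt{tn\log n})}$, and the hypotheses of Lemma~\ref{lem:cpx} (monotone, concave, $\Delta(x)\le\Delta(x/2)(2-\gamma)$) are still easily verified. With that single correction your argument coincides with the paper's proof.
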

\begin{proof}
Let $G$ be an $n$-vertex $P_t$-free graph. 
We set a threshold $\Delta = \Delta(n) := \sqrt{\frac{n \log (n+1)}{t}}$. 
If the maximum degree of $G$ is at most $\Delta$, we invoke Corollary~\ref{cor:path-argument}
to obtain a tree decomposition of $G$ of width $\Oh(t\Delta) = \Oh(\sqrt{tn\log n})$.
By standard techniques on graphs of bounded treewidth (cf.~\cite{DBLP:books/sp/CyganFKLMPPS15}), 
we solve \mwishyp{} on $G$ in time $2^{\Oh(\sqrt{tn\log n})}$.

Otherwise, $G$ contains a vertex of degree greater than $\Delta$.  We
choose (arbitrarily) such a vertex $v$ and we branch on $v$: either
$v$ is contained in the maximum independent set or not. In the first
case we delete $N_G[v]$ from $G$, in the second we delete only $v$
from $G$.  This gives the following recursion for the time complexity
$T(n)$ of the algorithm.
\begin{equation}\label{eq:Pt}
T(n) \leq \max\left(T(n-1) + T(n-\lceil \Delta(n) \rceil) + \Oh(n^2), 2^{\Oh(\sqrt{tn \log n})}\right).
\end{equation}
Observe that we have $T(n) = 2^{\Oh(\sqrt{tn\log n})}$ by Lemma~\ref{lem:cpx}
with $S(n) = 2^{\Oh(\sqrt{tn \log n})}$; it is straightforward to check that $\Delta(n) = \sqrt{\frac{n \log (n+1)}{t}}$ satisfies all the prerequisites of Lemma~\ref{lem:cpx}.
This finishes the proof of the theorem.
\end{proof}
\subsection{Approximation on broom-free graphs}
We now extend the argumentation in Theorem~\ref{thm:Pt-free} to \emph{$(d,t)$-brooms}---however, this time we are able to obtain only an approximation algorithm.
Recall that a $(d,t)$-broom $B_{d,t}$ is a graph consisting of a path $P_t$ and $d$ additional
vertices of degree one, all adjacent to one of the endpoints of the path. 


We now prove Theorem~\ref{thm:Fdt-free} from the introduction.

\begin{proof}[Proof of Theorem~\ref{thm:Fdt-free}]
Let $\Delta(n) = \frac{1}{2dt} \cdot n^{1/4}$; note that such a definition fits the prerequisites of $\Delta(n)$ for Lemma~\ref{lem:cpx}.
In the complexity analysis, we will use Lemma~\ref{lem:cpx} with this $\Delta(n)$ and without any function $S(n)$; this will give the promised running time bound.
In what follows, whenever we execute a branching step of the algorithm we argue that it fits into one of the subcases of the $\max$ in~\eqref{eq:cpx} of Lemma~\ref{lem:cpx}.

As in the proof of Theorem~\ref{thm:Pt-free}, as long as there exists a vertex in $G$
of degree larger than $\Delta$, we can branch on such a vertex $v$: in one subcase, we consider
independent sets not containing $v$ (and thus delete $v$ from $G$), in the other subcase, we consider
independent sets containing $v$ (and thus delete $N(v)$ from $G$).
Such a branching step can be conducted in polynomial time, and fits in the second subcase of $\max$ in~\eqref{eq:cpx}.
Thus, we can assume henceforth that the maximum degree of $G$ is at most $\Delta$.

We also assume that $G$ is connected and $n > (2dt)^4$, as otherwise we can consider every connected 
component independently and/or solve the problem by brute-force.

Later, we will also need a more general branching step.
If, in the course of the analysis, we identify a set $X \subseteq V(G)$ such that every
connected component of $G-X$ has size at most $n - \frac{|X|n^{1/4}}{2dt}$, then we can exhaustively
branch on all vertices of $X$ and independently resolve all connected components of the remaining 
graph. Such a branching fits into the last case of the $\max$ in~\eqref{eq:cpx}, and hence it again leads to the desired time
bound $2^{\Oh(n^{3/4} \log n)}$ by Lemma~\ref{lem:cpx}.

We start with greedily constructing a set $A_0$ with the following properties: $G[A_0]$ is connected
and $n^{1/2} \leq |N[A_0]| \leq n^{1/2} + \Delta$. We start with $A_0$ being a single arbitrary vertex and, as long as $|N[A_0]| < n^{1/2}$, we add an arbitrary vertex of $N(A_0)$ to $A_0$ and continue.
Since $G$ is connected, the process ends when $|N[A_0]| \geq n^{1/2}$; since the maximum degree
of $G$ is at most $\Delta$, we have $|N[A_0]| \leq n^{1/2} + \Delta < 2n^{1/2}$.

Let $B$ be the vertex set of the largest connected component of $G-N[A_0]$. 
If $|B| < n - n^{3/4}$, we exhaustively branch on $X := N[A_0]$, as
 $X$ is of size at most $2n^{1/2}$, but every connected component of $G-X$ is 
of size at most $n - n^{3/4} \leq n- \frac{1}{2} |X| n^{1/4}$.
Hence, we are left with the case $|B| > n - n^{3/4}$.

Let $S = N(B)$. Note that $A_0$ is disjoint from $N[B]$.
Let $A_1$ be the connected component of $G-S$ that contains $A_0$.
Since $S \subseteq N(A_0)$, we have that $N[A_1] \supseteq N[A_0]$; in particular,
$|N[A_1]| \geq n^{1/2}$ while, as $|B| > n-n^{3/4}$, we have $|N[A_1]| \leq n^{3/4}$.
Furthermore, since $S \subseteq N(A_0)$ and $A_0 \subseteq A_1$, we have $N(A_1) = S$.

Consider now the following case: there exists $v \in S$ such that $N(v) \cap B$ contains
an independent set $L$ of size $d$. Observe that such a vertex $v$ can be found
by an exhaustive search in time $n^{d+\Oh(1)}$. 

For such a vertex $v$ and independent set $L$, define $D$ to be the vertex set of the connected
component of $G-(N[L] \setminus \{v\})$ that contains $A_1$. Note that as $L \subseteq B$
we have $N[L] \cap A_1 = \emptyset$, and thus such a component $D$ exists. Furthermore, 
as $N(A_1) = S$, $D$ contains $S \setminus (N(L) \setminus \{v\})$.
In particular, $D$ contains $v$, and
\[|D| \geq |(A_1 \cup S) \setminus N(L)| \geq |N[A_1]| - \Delta \cdot |L| \geq n^{1/2} - dn^{1/4} \geq \frac{1}{2}n^{1/2}.\]
If $|D| < n - n^{1/2}$, then we exhaustively branch on the set $X := N[L] \setminus \{v\}$,
as $|X| \leq d\Delta \leq \frac{1}{2} n^{1/4}$ while every connected component of $G-X$
is of size at most $n-\frac{1}{2} n^{1/2}$ due to $D$ being of size at least $\frac{1}{2} n^{1/2}$ and 
at most $n-n^{1/2}$.
Consequently we can assume $|D| \geq n - n^{1/2}$.

Observe that $G[D]$ does not contain a path $P_t$ with one endpoint in $v$,
  as such a path, together with the set $L$, would induce a $B_{d,t}$ in $G$.
Consequently, we can
apply Lemma~\ref{lem:path-argument} to the graph $G[D]$ with the vertex $v_0=v$
and uniform weight $w(u) = 1$ for every $u \in D$, obtaining
a set $X_D \subseteq D$ of size $|X_D| \leq (t-1)\Delta + 1 \leq \frac{1}{2} n^{1/4}$
such that every connected component of $G[D \setminus X]$
has size at most $n/2$.
We branch exhaustively on the set $X = X_D \cup (N[L] \setminus \{v\})$: this set is of size
at most $n^{1/4}$, while every connected component of $G-X$ is of size at most $n/2$
due to the properties of $X_D$ and the fact that $|D| \geq n-n^{1/2}$.
This finishes the description of the algorithm in the case when there exists
$v \in S$ and an independent set $L \subseteq N(v) \cap B$ of size $d$.

We are left with the complementary case, where for every $v \in S$, the maximum independent
set in $N(v) \cap B$ is of size less than $d$.
We perform the following operation: by exhaustive search, we find a maximum independent set
$I_A$
in $G-B$ and greedily take it to the solution; that is, recurse on $G-N[I_A]$
and return the union of $I_A$ and the independent set found by the recursive call in $G-N[I_A]$.
Since $|B| > n-n^{3/4}$,
the exhaustive search runs in $2^{n^{3/4}} n^{\Oh(1)}$ time, fitting the first summand
of the right hand side in~\eqref{eq:cpx}. As a result, the graph reduces by at least one vertex,
   and hence the remaining running time of the algorithm fits into the second case of the $\max$ in~\eqref{eq:cpx}.
This gives the promised running time bound. It remains to argue about the approximation ratio;
to this end, it suffices to show the following claim.
\begin{claim}
If $I$ is a maximum independent set in $G$ and $I'$ is a maximum independent set
in $G-N[I_A]$, then $|I| - |I'| \leq d|I_A|$.
\end{claim}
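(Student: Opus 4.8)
The plan is to deduce the bound from the feasibility of a single, explicit independent set in the residual graph. Namely, $I \setminus N[I_A]$ is an independent set of $G - N[I_A]$, so maximality of $I'$ gives $|I'| \ge |I| - |I \cap N[I_A]|$; hence it suffices to show $|I \cap N[I_A]| \le d|I_A|$. I would split the vertices of $I$ lying in $N[I_A]$ into those outside $B$ and those inside $B$, and bound the two parts separately.

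For the part outside $B$: since $I_A \subseteq V(G) \setminus B$ we have $I \cap N[I_A] \cap (V(G)\setminus B) \subseteq I \setminus B$, and $I \setminus B$ is an independent set of the graph $G - B$ of which $I_A$ is a \emph{maximum} independent set; hence this part has size at most $|I_A|$.

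For the part inside $B$: here I would use that $I_A$ is disjoint from $B$, so $N[I_A] \cap B = N(I_A) \cap B$, together with the definition $S = N(B)$, which forces $N(u) \cap B = \emptyset$ for every $u \in I_A \setminus S$; thus $N[I_A] \cap B = \bigcup_{u \in I_A \cap S} (N(u) \cap B)$. For a fixed $u \in I_A \cap S$, the set $I \cap N(u) \cap B$ is an independent set contained in $N(u) \cap B$, and the defining hypothesis of the current branch states that $N(u) \cap B$ contains no independent set of size $d$; hence $I$ meets $N(u) \cap B$ in at most $d-1$ vertices. Summing over $u \in I_A \cap S$ bounds this part by $(d-1)|I_A \cap S| \le (d-1)|I_A|$. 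Adding the two bounds gives $|I \cap N[I_A]| \le |I_A| + (d-1)|I_A| = d|I_A|$, as required.

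I do not expect a genuine obstacle here: the argument is a routine accounting. The two points that need care are (i) that the outside-$B$ and inside-$B$ parts are counted over disjoint vertex sets, so the two budgets simply add, and (ii) that the case hypothesis of the branch --- every $N(v) \cap B$ with $v \in S$ has independence number less than $d$ --- is exactly what pays for the vertices of $I$ destroyed inside $B$ by greedily taking $I_A$; without it, a single vertex of $I_A \cap S$ could block arbitrarily many vertices of $I$ and no constant-factor guarantee would follow.
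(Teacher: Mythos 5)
Your proposal is correct and matches the paper's own argument: both reduce the claim to bounding $|I \cap N[I_A]|$, bound the part outside $B$ by $|I_A|$ via maximality of $I_A$ in $G-B$, and bound the part inside $B$ by $(d-1)|I_A\cap S|$ using the case hypothesis that every $N(v)\cap B$ with $v\in S$ has independence number less than $d$. The only cosmetic difference is that the paper charges vertices of $I\cap N[I_A]\cap B$ to neighbors in $I_A\cap S$ via an explicit map $f$, whereas you use the equivalent union bound over $u\in I_A\cap S$.
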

\begin{proof}
Let $J = I \setminus N[I_A]$. Clearly, $J$ is an independent set in $G-N[I_A]$, and thus
$|J| \leq |I'|$. It suffices to show that $|I| - |J| \leq d|I_A|$, that is,
  $|I \cap N[I_A]| \leq d|I_A|$.

The maximality of $I_A$ implies that $V(G)\setminus B \subseteq N[I_A]$.
As $I_A$ is a maximum independent set in $G-B$, we have that $|I \setminus B| \leq |I_A|$.
For every $w \in I \cap N[I_A] \cap B$, pick a neighbor $f(w) \in I_A \cap N(w)$.
Note that we have $f(w) \in S$. 
Since for every vertex $v \in S$, the size of the maximum independent set in $N(v) \cap B$
is less than $d$, we have $|f^{-1}(v)| < d$ for every $v \in S \cap I$. 
Consequently, 
  \[|I \cap N[I_A] \cap B| \leq (d-1)|I_A \cap S| \leq (d-1)|I_A|.\]
Together with $|I \setminus B| \leq |I_A|$, we have $|I \cap N[I_A]| \leq d|I_A|$, as desired.
\cqed\end{proof}
This finishes the proof of Theorem~\ref{thm:Fdt-free}.
\end{proof}
\section{Scattered Set}
\label{sec:scat}
We prove Theorem~\ref{thm:scatteredmain} in this section.  The
algorithm for \textsc{Scattered Set} for $P_t$-free graphs hinges on
the following combinatorial bound.
\begin{lemma}\label{lem:twbound}
For every $t\ge 2$ and for every $P_t$-free graph with $m$ edges, we have that $G$ has treewidth $\Oh(t\sqrt{m})$.
\end{lemma}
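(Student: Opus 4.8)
The plan is to isolate the small set of high-degree vertices, apply Corollary~\ref{cor:path-argument} to the bounded-degree remainder, and then fold the removed vertices back into every bag. First I would fix the threshold $\Delta := \lceil \sqrt{2m} \rceil$ and let $H := \{v \in V(G) : \deg_G(v) > \Delta\}$. The key elementary observation is that $H$ is small: since $|H| \cdot \Delta < \sum_{v \in H} \deg_G(v) \leq \sum_{v \in V(G)} \deg_G(v) = 2m$, we get $|H| < 2m/\Delta \leq \sqrt{2m}$. (So $|H| = \Oh(\sqrt m)$, which is $\Oh(t\sqrt m)$ since $t \geq 2$.)

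Next I would consider $G' := G - H$. Being an induced subgraph of $G$, it is still $P_t$-free, and deleting vertices cannot raise any degree, so $\Delta(G') \leq \Delta = \Oh(\sqrt m)$. Corollary~\ref{cor:path-argument} then gives a tree decomposition $(T, \{X_a\}_{a \in V(T)})$ of $G'$ of width $\Oh(t\Delta) = \Oh(t\sqrt m)$, computable in polynomial time.

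Finally I would add $H$ to every bag, i.e.\ take $(T, \{X_a \cup H\}_{a \in V(T)})$, and check that this is a tree decomposition of $G$: every edge of $G$ incident to a vertex of $H$ is now covered (every bag contains $H$); every edge of $G$ with both endpoints outside $H$ is an edge of $G'$ and hence covered by the original decomposition; and each $v \in H$ occupies all bags, so its bags form a (connected) subtree. The width is at most $\Oh(t\Delta) + |H| \leq \Oh(t\sqrt m) + \sqrt{2m} = \Oh(t\sqrt m)$, as desired.

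I do not expect a genuine obstacle here — the argument is essentially a counting bound on high-degree vertices combined with the black box Corollary~\ref{cor:path-argument}. The only point requiring a moment of care is the verification that augmenting every bag by the fixed set $H$ yields a valid tree decomposition (in particular the edge-coverage condition for edges touching $H$), which is the standard "add a vertex set to all bags" operation and raises the width by exactly $|H|$.
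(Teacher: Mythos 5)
Your proposal is correct and follows essentially the same approach as the paper: both remove the $\Oh(\sqrt m)$ vertices of degree above roughly $\sqrt m$ via the degree-sum count, apply Corollary~\ref{cor:path-argument} to the bounded-degree remainder, and then account for the removed set (the paper via the fact that deleting a vertex decreases treewidth by at most one, you via explicitly adding the set to every bag, which is the same standard argument).
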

\begin{proof}
  Let $X$ be the set of vertices of $G$ with degree at least
  $\sqrt{m}$.  The sum of the degrees of the vertices in $X$ is at most
  $2m$, hence we have $|X|\le 2m/\sqrt{m}=2\sqrt{m}$.
By the definition of $X$, the graph
$G-X$ has maximum degree less than $\sqrt{m}$. Thus by Corollary~\ref{cor:path-argument}, the treewidth of $G-X$ is $\Oh(t\sqrt{m})$.
As removing a vertex can decrease treewidth at most by one, it follows that $G$
has treewidth at most $\Oh(t\sqrt{m})+|X|=\Oh(t\sqrt{m})$.
\end{proof}

It is known that \textsc{Scattered Set} can be solved in time
$d^{\Oh(w)}\cdot n^{\Oh(1)}$ on graphs of treewidth $w$ using standard
dynamic programming techniques (cf.~\cite{DBLP:conf/esa/Thilikos11,DBLP:conf/esa/MarxP15}).  By Lemma~\ref{lem:twbound},
it follows that \textsc{Scattered Set} on $P_t$-free graphs can be solved in time $d^{\Oh(t\sqrt{m})}\cdot n^{\Oh(1)}$.
If $d$ is a fixed constant, then this running time can be bounded as $2^{\Oh(t\sqrt{m})+\Oh(\log n)}=2^{\Oh(t\sqrt{n+m})}$. If $d$ is part of the input, then (taking into account that we may assume $d\le n$) the running time is
\[
d^{\Oh(t\sqrt{m})}\cdot n^{\Oh(1)}=2^{\Oh(t\sqrt{m}\log n)+\Oh(\log n)}=2^{\Oh(t\sqrt{n+m}\log (n+m))}.
\]
Observe that if every component of a fixed graph $H$ is a path, then $H$ is an induced subgraph of $P_{2|V(H)|}$, which implies that $H$-free graphs are $P_{2|V(H)|}$-free. Thus the algorithm described here for $P_t$-free graphs implies the first part of Theorem~\ref{thm:scatteredmain}.

\subsection{Lower bounds for \textsc{Scattered Set}}
\label{sec:scat-low}

A standard consequence of the ETH and the so-called Sparsification Lemma is that there is no subexponential-time algorithm for MIS even on graphs of bounded degree (see, e.g., \cite{DBLP:books/sp/CyganFKLMPPS15}):

\begin{theorem}\label{thm:MIS}
  Assuming the ETH, there is no $2^{o(n)}$-time algorithm for \textsc{MIS} on
  $n$-vertex graphs of maximum degree 3.
\end{theorem}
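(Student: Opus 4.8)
The plan is to combine the ETH with the Sparsification Lemma of Impagliazzo, Paturi, and Zane, and then apply a gap-free, degree-reducing reduction from the sparse instance to a bounded-degree \textsc{MIS} instance. First I would recall the standard chain: assuming the ETH, there is no $2^{o(n)}$-time algorithm for $n$-variable \textsc{3SAT}; by the Sparsification Lemma, for every $\eps > 0$ there is a subexponential-time many-one reduction producing a disjunction of at most $2^{\eps n}$ instances, each with $n$ variables and $\Oh(n)$ clauses (the hidden constant depending on $\eps$). Consequently, assuming the ETH, there is no $2^{o(n+m)}$-time algorithm for \textsc{3SAT} on formulas with $n$ variables and $m$ clauses. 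Since $m = \Oh(n)$ in a sparse instance, this is the same as no $2^{o(n)}$-time algorithm.

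Next I would invoke the classical reduction from \textsc{3SAT} to \textsc{MIS}: from a formula $\varphi$ with $n$ variables and $m$ clauses, build a graph $G$ with one triangle per clause (a vertex per literal occurrence), plus an edge between every pair of vertices corresponding to complementary literals; then $\varphi$ is satisfiable if and only if $\alpha(G) = m$. The graph $G$ has $\Oh(m)$ vertices, so a $2^{o(N)}$-time algorithm for \textsc{MIS} on $N$-vertex graphs would give a $2^{o(m)} = 2^{o(n)}$-time algorithm for sparse \textsc{3SAT}, contradicting the ETH. The catch is that in this $G$ the degree of a literal vertex is $2$ (within its triangle) plus the number of occurrences of the complementary literal, which need not be bounded. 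So the last and main step is a standard degree-reduction gadget: replace each variable $x$ occurring $k$ times by a cycle of $2k$ vertices alternating between ``copies of $x$'' and ``copies of $\lnot x$'', take a maximum independent set within the cycle (which has size exactly $k$ and picks either all the $x$-copies or all the $\lnot x$-copies), and attach each clause-triangle literal vertex to a distinct private copy. A short calculation shows the resulting graph $G'$ has maximum degree $3$, has $\Oh(n+m) = \Oh(n)$ vertices, and satisfies $\alpha(G') = (\sum_x k_x) + (\text{number of satisfiable clauses})$, so $\varphi$ is satisfiable iff $\alpha(G')$ attains its maximal value $m + \sum_x k_x$; equivalently, one can phrase this as the well-known fact that \textsc{Vertex Cover}/\textsc{Independent Set} restricted to subcubic graphs is already NP-hard and, under the ETH, admits no $2^{o(n)}$-time algorithm, and cite it directly (e.g.\ \cite{DBLP:books/sp/CyganFKLMPPS15}).

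The step I expect to require the most care is verifying that the degree-reduction gadget preserves the answer \emph{exactly} (not merely up to an approximation) and that the vertex count stays linear: one must check that a maximum independent set is never forced to behave inconsistently across the two sides of a variable cycle, that the private attachment copies are genuinely private so no extra adjacency is introduced, and that the final degree of every clause-triangle vertex is exactly $3$ (two triangle-neighbors and one cycle-copy) and every cycle vertex is exactly $3$ (two cycle-neighbors and one clause attachment, or a pendant dummy of degree $1$ if that copy is unused). All of this is entirely routine and, as noted, can be replaced by a citation; the conceptual content of the theorem is just ``ETH $\Rightarrow$ sparse \textsc{3SAT} hard $\Rightarrow$ subcubic \textsc{MIS} hard''.
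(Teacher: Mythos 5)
Your proposal is correct and follows essentially the same route as the paper, which does not spell out a proof at all but cites this as the standard consequence of the ETH plus the Sparsification Lemma (via \cite{DBLP:books/sp/CyganFKLMPPS15}); your chain ``ETH $\Rightarrow$ sparse \textsc{3SAT} $\Rightarrow$ triangle-and-variable-cycle reduction to subcubic \textsc{MIS}'' is exactly the standard argument behind that citation. The only loose end, the exact preservation of $\alpha$ by the degree-reduction gadget, is the routine exchange argument you identify, and for the theorem one only needs the weaker equivalence that $\varphi$ is satisfiable iff $\alpha(G')$ attains the value $m+\sum_x k_x$, which your construction gives directly.
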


A very simple reduction can reduce MIS to \textsc{3-Scattered Set} for $P_5$-free graphs, showing that, assuming the ETH,  there is no algorithm subexponential in the number of vertices for the latter problem. This proves Theorem~\ref{thm:nosubexpdist3} stated in the Introduction.
\begin{proof}[Proof of Theorem~\ref{thm:nosubexpdist3}]
  Given an $n$-vertex $m$-edge graph $G$ with maximum degree 3 and an
  integer $k$, we construct a $P_5$-free graph $G'$ with $n+m=\Oh(n)$ vertices such
  that $\alpha(G)=\alpha_3(G')$. This reduction proves that a
  $2^{o(n)}$-time algorithm for \textsc{3-Scattered Set}
  could be used to obtain a $2^{o(n)}$-time algorithm for MIS on
  graphs of maximum degree 3, and this would violate the ETH by
  Theorem~\ref{thm:MIS}.

  We may assume that $G$ has no isolated vertices.  The graph $G'$ contains one vertex for each vertex of $G$ and
  additionally one vertex for each edge of $G$. The $m$ vertices of
  $G'$ representing the edges of $G$ form a clique. Moreover, if the
  endpoints of an edge $e\in E(G)$ are $u,v\in V(G)$, then the vertex
  of $G'$ representing $e$ is connected with the vertices of $G'$
  representing $u$ and $v$. This completes the construction of $G'$.
  It is easy to see that $G'$ is $P_5$-free: an induced path of $G'$
  can contain at most two vertices of the clique corresponding to
  $E(G)$ and the vertices of $G'$ corresponding to the vertices of $G$
  form an independent set.

  If $S$ is an independent set of $G$, then we claim that the
  corresponding vertices of $G'$ are at distance at least 3 from each
  other. Indeed, no two such vertices have a common neighbor: if
  $u,v\in S$ and the corresponding two vertices in $G'$ have a common
  neighbor, then this common neighbor represents an edge $e$ of $G$
  whose endpoints are $u$ and $v$, violating the assumption that $S$
  is independent. Conversely, suppose that $S'\subseteq V(G')$ is a
  set of $k$ vertices with pairwise distance at least 3 in $G'$. If
  $k\ge 2$, then all these vertices represent vertices of $G$: observe
  that for every edge $e$ of $G$, the vertex of $G'$ representing $e$
  is at distance at most 2 from every other (non-isolated) vertex of
  $G'$. We claim that $S'$ corresponds to an independent set of
  $G$. Indeed, if $u,v\in S'$ and there is an edge $e$ in $G'$ with
  endpoints $u$ and $v$, then the vertex of $G'$ representing $e$ is a
  common neighbor of $u$ and $v$, a contradiction.  \end{proof}

Next we give negative results on the existence of algorithms for \textsc{Scattered Set} that have running time subexponential in the number of edges. To rule out such algorithms, we construct instances that have bounded degree: then being subexponential in the number of vertices or the number of edges are the same. We consider first claw-free graphs. The key insight here is that \textsc{Scattered Set} with $d=3$ in line graphs (which are claw-free) is essentially the \textsc{Induced Matching} problem, for which it is easy to prove hardness results.
\begin{theorem}\label{thm:nosubexpclaw}
  Assuming the ETH, \textsc{$d$-Scattered Set} does not have a $2^{o(n)}$ algorithm
  on $n$-vertex claw-free graphs of maximum degree 6 for any fixed $d\ge 3$.
\end{theorem}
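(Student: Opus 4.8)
The plan is to reduce from \textsc{MIS} on bounded-degree graphs (Theorem~\ref{thm:MIS}) via the line-graph construction, exploiting that \textsc{$3$-Scattered Set} in a line graph $L(G)$ corresponds exactly to \textsc{Induced Matching} in $G$. Recall that two edges $e,f$ of $G$ are at distance at least $3$ in $L(G)$ precisely when they share no endpoint and no edge of $G$ joins an endpoint of $e$ to an endpoint of $f$ --- that is, $\{e,f\}$ is an induced matching. So it suffices to produce, from an $n$-vertex graph $H$ of maximum degree $3$ (an instance of \textsc{MIS}), a bounded-degree graph $G$ such that $\alpha(H)$ equals the maximum induced matching size in $G$, and then take $G' := L(G)$, which is claw-free, has $O(|E(G)|) = O(n)$ vertices, and has maximum degree $2(\Delta(G)-1)$.

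First I would recall the standard gadget turning \textsc{MIS} into \textsc{Induced Matching}: subdivide each edge of $H$ once (or a fixed small number of times) and attach a pendant edge at each original vertex; then a maximum induced matching can be taken to consist of the pendant edges at an independent set of $H$, plus a fixed contribution from the subdivision edges, so that the induced matching number equals $\alpha(H)$ plus a constant offset depending only on $|E(H)|$. One must check the degree bound: each original vertex of $H$ gains one pendant neighbor, so its degree becomes at most $\Delta(H)+1 \le 4$; subdivision vertices have degree $2$; pendant vertices have degree $1$. Hence $\Delta(G) \le 4$, and $\Delta(L(G)) \le 2(\Delta(G)-1) \le 6$, matching the claimed bound. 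The number of vertices of $G$ is $O(n+m) = O(n)$ since $m \le 3n/2$, and $|E(G)| = O(n)$, so $G'=L(G)$ has $O(n)$ vertices as required.

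Next I would verify the two directions of the correspondence carefully: (i) given an independent set $I$ in $H$, the pendant edges at vertices of $I$ together with one suitably chosen edge per subdivided edge of $H$ form an induced matching in $G$ of size $|I| + (\text{offset})$; the ``induced'' condition holds because distinct pendant edges at non-adjacent vertices of $H$ are joined only through a subdivided path of length at least $2$, and the chosen subdivision edges are spaced out along each path. (ii) Conversely, any induced matching $M$ in $G$ can be normalized without decreasing its size so that it uses the pendant edge at every vertex it ``touches'', and then the set of original vertices whose pendant edge lies in $M$ is independent in $H$ (if two adjacent vertices both contributed pendant edges, the subdivided edge between them would witness that these two pendant edges are at distance $2$, contradicting the induced property), giving $|M| \le \alpha(H) + (\text{offset})$. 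Combining, the maximum induced matching number of $G$ is exactly $\alpha(H)$ plus a fixed constant, hence $\alpha_3(L(G))$ determines $\alpha(H)$.

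Finally, I would assemble the ETH consequence: a $2^{o(n)}$-time algorithm for \textsc{$3$-Scattered Set} on claw-free graphs of maximum degree $6$ would, via this reduction, solve \textsc{MIS} on $n$-vertex degree-$3$ graphs in $2^{o(n)}$ time, contradicting Theorem~\ref{thm:MIS} and hence the ETH. For fixed $d \ge 4$, I would handle it by the same line-graph idea but inserting longer subdivisions (roughly $d$ subdivision vertices per edge) so that distance-$d$ in $L(G)$ again corresponds to a combinatorial ``$d$-induced matching'' that encodes independence in $H$; the degree bound is unaffected since subdivision vertices have degree $2$. The main obstacle I anticipate is the bookkeeping in direction (ii): showing that an arbitrary (not necessarily well-structured) distance-$d$ set in the line graph can be rerouted to the canonical pendant-edge form without loss, and pinning down the exact constant offset uniformly over all $d$; everything else is routine gadgetry and degree counting.
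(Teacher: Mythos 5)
Your high-level plan is the same as the paper's (reduce from \textsc{MIS} on degree-$3$ graphs via Theorem~\ref{thm:MIS}, pass to a line graph so that claw-freeness and the degree bound $2(\Delta-1)\le 6$ come for free, and read off a distance-$d$ set as an induced-matching-like structure), but the concrete gadget you propose for $d=3$ is wrong. If you subdivide each edge of $H$ even once, then for adjacent $u,v\in V(H)$ the pendant edges $up_u$ and $vp_v$ have \emph{no} edge of $G$ between their endpoint sets (the edge $uv$ has been destroyed by the subdivision), so $\{up_u,vp_v\}$ \emph{is} an induced matching, i.e.\ these two vertices of $L(G)$ are at distance $3$ and may both be chosen in a $3$-scattered set. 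Your justification in direction (ii) --- that ``the subdivided edge between them would witness that these two pendant edges are at distance $2$'' --- is exactly backwards: the witness must be an actual edge of $G$ joining an endpoint of $up_u$ to an endpoint of $vp_v$, which forces $uv$ to be present, i.e.\ \emph{no} subdivision. With one subdivision per edge the pendant edges at \emph{all} vertices of $H$ are pairwise induced, so the maximum induced matching is at least $|V(H)|$ regardless of $\alpha(H)$, and the claimed identity ``induced matching number $=\alpha(H)+$ offset depending only on $|E(H)|$'' fails already for $H=P_3$. The correct $d=3$ construction is the paper's: attach a pendant edge to each vertex of $H$ \emph{without} subdividing and take the line graph; then pendant edges at adjacent vertices are at line-graph distance $2<3$, while at non-adjacent vertices they are at distance $\ge 3$.

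The same calibration issue undermines your sketch for general $d$. If you insert ``roughly $d$'' subdivision vertices per edge while keeping single pendant edges, the pendant edges at adjacent vertices of $H$ end up at line-graph distance about $d+2\ge d$, so adjacency in $H$ is no longer forbidden and the reduction again loses $\alpha(H)$. What the paper does instead is grow the \emph{pendant} paths, not the edge subdivisions: for odd $d$ it attaches a path of $(d-1)/2$ edges at each vertex and selects the far edge $e_{v,\ell}$, so adjacent vertices give distance exactly $2\ell+1-1=d-1$ and non-adjacent ones give distance $\ge d$; for even $d$ it subdivides each edge exactly once and attaches paths of length $d/2-1$. The normalization you flag as the ``main obstacle'' (rerouting an arbitrary scattered set to far pendant edges) is indeed needed and is handled in the paper by noting that any two chosen edges touching the same pendant path $Q_v$ would be at distance $\le \ell<d$, so each $Q_v$ hosts at most one chosen edge, which can be slid to $e_{v,\ell}$ without decreasing pairwise distances. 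So the proof as written has a genuine gap: the gadget parameters must be chosen so that adjacency in $H$ translates to distance exactly $d-1$ (which your subdivision counts do not achieve), and the offset-free equality $\alpha_d(G')=\alpha(G)$ of the paper avoids the bookkeeping your version would require.
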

\begin{proof}
  Given an $n$-vertex graph $G$ with maximum degree 3, we
  construct a claw-free graph $G'$ with $\Oh(dn)$ vertices and maximum
  degree 4 such that $\alpha_d(G')=\alpha(G)$. Then by
  Theorem~\ref{thm:MIS}, a $2^{o(n)}$-time algorithm for
  \textsc{$d$-Scattered Set} for $n$-vertex claw-free graphs of maximum
  degree 4 would violate the ETH.

  The construction is slightly different based on the parity of $d$;
  let us first consider the case when $d$ is odd.
  Let us construct the graph $G^+$ by attaching a path $Q_v$ of
  $\ell=(d-1)/2$ edges to each vertex $v\in V(G)$; let us denote by
  $e_{v,1}$, $\dots$, $e_{v,\ell}$ the edges of this path such that
  $e_{v,1}$ is incident with $v$. The graph $G'$ is defined as the
  line graph of $G^+$, that is, each vertex of $G'$ represents an edge
  of $G^+$ and two vertices of $G'$ are adjacent if the corresponding
  two vertices share an endpoint. It is well known that line graphs
  are claw-free. As $G^+$ has $\Oh(dn)$ edges and maximum degree 4
  (recall that $G$ has maximum degree 3), the line graph $G'$ has maximum degree 6 with
  $\Oh(dn)$ vertices an edges. Thus an algorithm for \textsc{Scattered Set} with
  running time $2^{o(n)}$ on $n$-vertex claw-free graphs of maximum degree 3
  could be used to solve MIS on $n$-vertex graphs with
  maximum degree 3 in time $2^{o(n)}$, contradicting the ETH.

  If there is an independent set $S$ of size $k$ in $G$, then we claim
  that the set $S'=\{e_{v,\ell}\mid v\in S\}$ is a $d$-scattered set
  of size $k$ in $G'$. To see this, suppose for a contradiction that
  there are two vertices $u,v\in S$ such that the vertices of $G'$
  representing $e_{u,\ell}$ and $e_{v,\ell}$ are at distance at most
  $d-1$ from each other. This implies that there is a path in $G^+$
  that has at most $d$ edges and whose first and last edges are
  $e_{u,\ell}$ and $e_{v,\ell}$, respectively. However, such a path
  would need to contain all the $\ell$ edges of path $Q_u$ and all the
  $\ell$ edges of $Q_v$, hence it can contain at most $d-2\ell=1$
  edges outside these two paths. But $u$ and $v$ are not adjacent in
  $G^+$ by assumption, hence more than one edge is needed to complete
  $Q_u$ and $Q_v$ to a path, a contradiction.

  Conversely, let $S'$ be a distance-$d$ scattered set in $G'$, which
  corresponds to a set $S^+$ of edges in $G^+$. Observe that for any
  $v\in V(G)$, at most one edge of $S^+$ can be incident to the
  vertices of $Q_v$: otherwise, the corresponding two vertices in the
  line graph $G'$ would have distance at most $\ell<d$. It is easy to
  see that if $S^+$ contains an edge incident to a vertex of $Q_v$,
  then we can always replace this edge with $e_{v,\ell}$, as this can
  only move it farther away from the other edges of $S^+$. Thus we may
  assume that every edge of $S^+$ is of the form $e_{v,\ell}$. Let us
  construct the set $S=\{v\mid e_{v,\ell}\in S^+\}$, which has size
  exactly $k$. Then $S$ is independent in $G$: if $u,v\in S$ are
  adjacent in $G$, then there is a path of $2\ell+1=d$ edges in $G^+$
  whose first an last edges are $e_{v,\ell}$ and $e_{u,\ell}$,
  respectively, hence the vertices of $G'$ corresponding to them have
  distance at most $d-1$.

  If $d\ge 4$ is even, then the proof is similar, but we obtain the
  graph $G^+$ by first subdividing each edge and attaching paths of
  length $\ell=d/2-1$ to each original vertex. The proof proceeds in a
  similar way: if $u$ and $v$ are adjacent in $G$, then $G^+$ has a
  path of $2\ell+2=d$ edges whose first and last edges are
  $e_{v,\ell}$ and $e_{u,\ell}$, respectively, hence the vertices of
  $G'$ corresponding to them have distance at most $d-1$.
\end{proof}

There is a well-known and easy way of proving hardness of MIS on graphs with large girth: subdividing edges increases girth and the size of the largest independent set changes in a controlled way.

\begin{lemma}\label{lem:girth}
  If there is an $2^{o(n)}$-time algorithm for \textsc{MIS} on $n$-vertex
  graphs of maximum degree 3 and girth more than $g$ for any fixed
  $g>0$, then the ETH fails.
\end{lemma}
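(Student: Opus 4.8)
The plan is to use the classical edge-subdivision trick, which is already foreshadowed in the paragraph preceding the statement (``subdividing edges increases girth and the size of the largest independent set changes in a controlled way''). Starting from an $n$-vertex graph $G$ of maximum degree $3$, I would subdivide every edge the same number of times, producing a graph $G'$, and then observe that (i) $G'$ still has maximum degree $3$, (ii) the girth of $G'$ exceeds $g$ once we subdivide enough times, and (iii) $\alpha(G')$ is determined from $\alpha(G)$ by a simple closed formula. Then a $2^{o(n')}$-time algorithm on graphs of bounded degree and girth more than $g$ would, run on $G'$, solve \textsc{MIS} on $G$ in time $2^{o(n)}$, contradicting Theorem~\ref{thm:MIS}.

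Concretely, I would subdivide each edge exactly $2k$ times for a suitable constant $k=k(g)$ (recall from the Introduction that subdividing each edge with $2k$ new vertices gives $\alpha(G')=\alpha(G)+k|E(G)|$, which is the clean, parity-free version of the formula). First I would verify the vertex count: $G'$ has $n' = n + 2k|E(G)| \le n + 3kn = \Oh(n)$ vertices, using $|E(G)|\le 3n/2$ from the degree bound; and $G'$ clearly still has maximum degree $3$, since subdividing an edge only creates new degree-$2$ vertices and does not change the degree of any old vertex. Next I would check the girth: any cycle of $G'$ comes from a cycle of $G$ with each of its (at least $3$) edges replaced by a path on $2k+1$ edges, so the shortest cycle of $G'$ has length at least $3(2k+1)$; choosing $k \ge g/6$ (say $k=\lceil g/6\rceil$, a constant depending only on $g$) makes this exceed $g$. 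Then I would recall/re-derive the independence-number identity $\alpha(G')=\alpha(G)+k|E(G)|$: given a maximum independent set $S$ of $G$, on each subdivision path of $2k$ internal vertices one can greedily pick $k$ of them consistently with $S$; conversely any independent set of $G'$ can be pushed to this normal form without decreasing its size, and its restriction to $V(G)$ is independent in $G$. (This is exactly the observation attributed to Alekseev in the Introduction, so I may simply cite it.) Since $k|E(G)|$ is computable in polynomial time, recovering $\alpha(G)$ from $\alpha(G')$ is immediate.

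Putting it together: suppose $A$ is a $2^{o(n')}$-time algorithm for \textsc{MIS} on graphs of maximum degree $3$ and girth more than $g$. Given an arbitrary $n$-vertex graph $G$ of maximum degree $3$, build $G'$ as above in polynomial time; it has $n'=\Oh(n)$ vertices, maximum degree $3$, and girth $>g$, so $A$ computes $\alpha(G')$ in time $2^{o(n')}=2^{o(n)}$, whence $\alpha(G)=\alpha(G')-k|E(G)|$ is obtained in time $2^{o(n)}$. This contradicts Theorem~\ref{thm:MIS}, proving the lemma.

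I do not anticipate a serious obstacle here; the only point requiring a little care is making sure every quantity that must be a constant really is one — in particular that $k$ depends only on $g$ and not on $G$ — and that the blow-up in vertex count stays linear, which it does precisely because the degree bound forces $|E(G)| = \Oh(n)$. If I wanted to avoid even citing the independence-number identity, I would spell out the greedy argument on a single subdivision path, but that is routine and not the heart of the matter.
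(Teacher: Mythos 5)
Your proof is correct and is essentially the paper's own argument: the paper also subdivides each edge an even number of times (it uses $2g$ subdivisions, you use $2\lceil g/6\rceil$, an immaterial difference), notes the graph keeps maximum degree $3$, has $\Oh(n)$ vertices and girth $3(2k+1)>g$, uses the identity $\alpha(G')=\alpha(G)+k|E(G)|$, and concludes via Theorem~\ref{thm:MIS}. No gaps; nothing further to add.
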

\begin{proof}
  Let $g$ be a fixed constant and let $G$ be a simple graph with $n$
  vertices, $m$ edges, and maximum degree 3 (hence $m=\Oh(n)$). We
  construct a graph $G'$ by subdividing each edge with $2g$ new
  vertices. We have that $G'$ has $n'=\Oh(n+gm)=\Oh(n)$ vertices, maximum
  degree 3, and girth at least $3(2g+1)>g$. It is known and easy to show
  that subdividing the edges this way increases the size of the
  maximum independent set exactly by $gm$. Thus a $2^{o(n')}$-
  time algorithm for $n'$-vertex graphs of maximum degree 3 and girth
  at least $g$ could be used to give a $2^{o(n)}$-time algorithm for
  $n$-vertex graphs of maximum degree $3$, hence the ETH would fail by
  Theorem~\ref{thm:MIS}.
\end{proof}

We use the lower bound of Lemma~\ref{lem:girth} to prove lower bounds
for \textsc{Scattered Set} on $C_t$-free graphs.
\begin{theorem}\label{thm:nosubexpcycle}
  Assuming the ETH, \textsc{$d$-Scattered Set} does not have a $2^{o(n)}$
  algorithm on $n$-vertex $C_t$-free graphs with maximum degree 3 for
  any fixed $t\ge 3$ and $d\ge 2$.
\end{theorem}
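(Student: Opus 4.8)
The plan is to reduce from \textsc{MIS} on graphs of maximum degree $3$ and large girth, using Lemma~\ref{lem:girth}. Given such a graph $G$ with $n$ vertices, $m=\Oh(n)$ edges, maximum degree $3$, and girth more than some constant $g$ that we will fix later depending on $t$ and $d$, I would subdivide each edge of $G$ enough times to make the reduced graph $C_t$-free while controlling both the maximum degree and the size of the largest distance-$d$ scattered set. The key point is that a sufficiently subdivided cubic graph of large girth has no short cycles at all (its girth becomes larger than $t$), hence it is trivially $C_t$-free, and it still has maximum degree $3$.

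The main steps are as follows. First, fix a subdivision parameter $s$ (depending on $t$ and $d$) and construct $G'$ by replacing each edge of $G$ by a path with $s$ internal vertices. Then $G'$ has $n'=n+sm=\Oh(n)$ vertices and maximum degree $3$; by choosing $g$ (the girth lower bound we demand from $G$ via Lemma~\ref{lem:girth}) and $s$ large enough, every cycle of $G'$ is long, in particular longer than $t$, so $G'$ is $C_t$-free. Second, I need a correspondence between $\alpha(G)$ and $\alpha_d(G')$. This is the analogue of the classical fact used in Lemma~\ref{lem:girth}: subdividing each edge a fixed number of times changes the optimum of the distance-$d$ scattered set problem by an additive amount that depends only on $G$'s edge count, $d$, and $s$. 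Concretely, on a long subdivided path, an optimal distance-$d$ scattered set behaves in a predictable, periodic way, and one argues that along each subdivided edge one can count exactly how many scattered-set vertices an optimal solution places, independently of global choices, provided $s$ is chosen so that the ``boundary effects'' at the two original endpoints of each edge are decoupled (this is where large girth and large $s$ help: the original vertices of $G$ are pairwise far apart in $G'$). Third, from this additive relation $\alpha_d(G') = \alpha(G) + f(m,d,s)$ (for an explicitly computable $f$), a $2^{o(n')}$-time algorithm for \textsc{$d$-Scattered Set} on $C_t$-free graphs of maximum degree $3$ yields a $2^{o(n)}$-time algorithm for \textsc{MIS} on cubic graphs of girth more than $g$, contradicting the ETH by Lemma~\ref{lem:girth}.

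I expect the main obstacle to be the second step: proving the clean additive identity $\alpha_d(G') = \alpha(G) + f(m,d,s)$. For plain independent set ($d=2$) this is the textbook ``subdivide once, $\alpha$ increases by the number of subdivided edges'' argument, but for general $d\ge 2$ and multiple subdivisions per edge one must be careful that an optimal distance-$d$ scattered set does not gain an advantage by, say, placing a vertex at an original vertex of $G$ of degree $3$ in a way that correlates the counts on the three incident subdivided edges. The right way to handle this is to pick $s$ to be a convenient residue class modulo $d$ (so that a path of the chosen length has a unique ``phase'' for the periodic optimal pattern), show via an exchange argument that one may assume no scattered-set vertex lies in the ``buffer'' near an original vertex, and then reduce to choosing, for each original vertex, whether it is ``selected'' — which is exactly an independent set constraint in $G$ because two adjacent original vertices are joined by a path too short to accommodate both selections together with the forced interior pattern. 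Once $s$ and $g$ are fixed to make all of this go through, the remaining bookkeeping (vertex counts, degree bound, $C_t$-freeness) is routine, and the construction also settles the case $d\ge 2$ rather than just $d\ge 3$, as the theorem statement claims.
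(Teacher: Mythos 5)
Your overall framework is the same as the paper's: reduce from \textsc{MIS} on max-degree-$3$ graphs of girth more than a constant via Lemma~\ref{lem:girth}, keep maximum degree $3$, and get $C_t$-freeness because all cycles of the constructed graph are long. The gap is your second step: for plain subdivision the claimed identity $\alpha_d(G')=\alpha(G)+f(m,d,s)$ is false, for every choice of $s$, so no exchange/buffer argument can establish it. A concrete test: take $H_1=K_{1,3}$ and $H_2=P_4$, both with $m=3$ edges, maximum degree at most $3$ and no cycles at all, hence admissible inputs. For $d=3$ and $s=3$ (the residue $s\equiv 0 \pmod d$, which is the one giving your ``unique phase''), subdividing turns $H_2$ into a path on $13$ vertices with $\alpha_3=5$, forcing $f=5-\alpha(H_2)=3$, while it turns $H_1$ into a spider with three legs of four edges for which a short case analysis gives $\alpha_3=4$ (the center plus the three leaf ends; two legs cannot both contribute two vertices because their vertices adjacent to the center are at distance $2$), forcing $f=4-\alpha(H_1)=1$ --- a contradiction. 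The same two-graph test kills the other residues (e.g.\ $d=3$, $s=2$ gives $f=2$ versus $f=1$). The structural reason is exactly the difficulty you flag: when $s\equiv 0\pmod d$, the interior pattern on an edge with a selected endpoint is completely rigid and its last vertex is forced next to the other endpoint, so an unselected vertex with two selected neighbours (unavoidable in cubic graphs) creates a conflict and the optimum drifts away from $\alpha(G)+\frac{s}{d}m$ in a structure-dependent way; for any other residue, the interior capacity of an edge with no selected endpoint strictly exceeds that of an edge with a selected endpoint, so the natural accounting never rewards selecting original vertices and $\alpha_d(G')$ again fails to determine $\alpha(G)$.

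What is missing is a capacity-normalizing gadget, and this is precisely what the paper's proof adds. It subdivides each edge with exactly $d-2$ vertices, so adjacent original vertices end up at distance $d-1<d$ and independence is enforced directly rather than through bookkeeping, and it attaches a pendant path of length $d-1$ to each of the $(d-2)m$ subdivision vertices. Each subdivision vertex together with its pendant path spans a path on $d$ vertices and hence carries at most one solution vertex, while the degree-one end of every pendant is always at distance at least $d$ from all other canonical choices; this yields the exact identity $\alpha_d(G')=\alpha(G)+(d-2)m$, with maximum degree $3$, $\Oh(d^2n)$ vertices and girth more than $(d-1)t$. If you want to salvage your plan you must introduce some such gadget; subdivision alone cannot work. (Also, the theorem as stated already covers $d\ge 2$; the case $d=2$ is plain \textsc{MIS} and is immediate from Lemma~\ref{lem:girth}, so that is not an additional feature of your construction.)
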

\begin{proof}
  Let $G$ be an $n$-vertex $m$-edge graph of maximum degree 3 and
  girth more than $t$. We construct a graph $G'$ the following way: we
  subdivide each edge of $G$ with $d-2$ new vertices to create a path
  of length $d-1$, and attach a path of length $d-1$ to each of the
  $(d-2)m=\Oh(dn)$ new vertices created. The resulting graph has maximum
  degree 3, $\Oh(d^2n)$ vertices and edges, and girth more than $(d-1)t$
  (hence it is $C_t$-free). We claim that
  $\alpha_d(G')=\alpha(G)+m(d-2)$ holds. This means that an
  $2^{o(n')}$-time algorithm for \textsc{Scattered Set} $n'$-vertex
  $C_t$-free graphs with maximum degree 3 would give a $2^{o(n)}$-time
  algorithm for $n$-vertex graphs of maximum degree 3 and girth more
  than $t$ and this would violate the ETH by Lemma~\ref{lem:girth}.

  To see that $\alpha_d(G')=\alpha(G)+m(d-2)$ holds, consider first an
  independent set $S$ of $G$. When constructing $G'$, we attached
  $m(d-2)$ paths of length $d-1$. Let $S'$ contain the degree-1
  endpoints of these $m(d-2)$ paths, plus the vertices of $G'$
  corresponding to the vertices of $S$. It is easy to see that any two
  vertices of $S'$ has distance at least $d$ from each other: $S$ is
  an independent set in $G$, hence the corresponding vertices in $G'$
  are at distance at least $2(d-1)\ge d$ from each other, while the
  degree-1 endpoints of the paths of length $d-1$ are at distance at
  least $d$ from every other vertex that can potentially be in $S'$. This shows  $\alpha_d(G')\ge \alpha(G)+m(d-2)$.  Conversely, let $S'$ be a set of vertices in $G'$ that are at distance at least $d$ from each other. The set $S'$ contains two types of vertices: let $S'_1$ be the vertices that correspond to the original vertices of $G$ and let $S'_2$ be the vertices that come from the $m(d-2)d$ new vertices introduced in the construction of $G'$. Observe that $S'_2$ can be covered by $m(d-2)$ paths of length $d-1$ and each such path can contain at most one vertex of $S'$, hence at most $m(d-2)$ vertices of $S'$ can be in  $S'_2$. We claim that $S'_1$ can contain at most $\alpha(G)$ vertices, as $S'\cap S'_1$ corresponds to an independent set of $G$. Indeed, if $u$ and $v$ are adjacent vertices of $G$, then the corresponding two vertices of $G'$ are at distance $d-1$, hence they cannot be both present in $S'$. This shows  $\alpha_d(G')\le \alpha(G)+m(d-2)$, completing the proof of the correctness of the reduction.
\end{proof}

As the following corollary shows, putting together Theorems~\ref{thm:nosubexpclaw} and \ref{thm:nosubexpcycle} implies Theorem~\ref{thm:scatteredmain}(2).
\begin{corollary}\label{cor:nosubexp}
If $H$ is a graph having a component that is not a path, then, assuming the ETH, \textsc{$d$-Scattered Set} has no $2^{o(n+m)}$-time algorithm on $n$-vertex $m$-edge $H$-free graphs for any fixed $d\ge 3$.
\end{corollary}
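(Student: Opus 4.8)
The plan is to deduce Corollary~\ref{cor:nosubexp} from the two preceding hardness theorems by a short case analysis on the structure of the forbidden graph $H$. Let $H$ be a graph having a component $H_0$ that is not a path. First I would recall the classical dichotomy quoted in the introduction: if $H$ is not a forest, or if some tree component of $H$ has a vertex of degree at least $4$, or if some tree component has at least two vertices of degree at least $3$, then \mis{} itself is already NP-hard on $H$-free graphs (indeed on $H$-free graphs of bounded degree via edge subdivision), which together with the ETH rules out a $2^{o(n+m)}$-time algorithm. Since \textsc{$d$-Scattered Set} with $d\ge 3$ has \mis{} (i.e.\ \textsc{$2$-Scattered Set}) embedded into it by the standard subdivision trick already exploited in Theorem~\ref{thm:nosubexpcycle}, the same lower bound transfers. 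So in all these cases we are done immediately.

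The remaining case is when $H$ is a forest in which every component is either a path or a ``spider'': a tree with exactly one vertex of degree $3$ and no vertex of higher degree, i.e.\ three paths glued at a common endpoint $c$. Since $H_0$ is not a path by hypothesis, $H_0$ is such a spider. The key observation is that a spider contains an induced claw $K_{1,3}$ (take $c$ together with its three neighbours). Hence every claw-free graph is $H_0$-free, and therefore $H$-free (a graph that does not contain $H_0$ as an induced subgraph cannot contain $H$, since $H_0$ is a component of $H$). Consequently Theorem~\ref{thm:nosubexpclaw} applies: under the ETH, \textsc{$d$-Scattered Set} for any fixed $d\ge 3$ has no $2^{o(n)}$-time algorithm on $n$-vertex claw-free graphs of bounded degree, and these graphs are in particular $H$-free. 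Because the instances produced there have maximum degree at most $6$, we have $m=\Oh(n)$, so $2^{o(n)}=2^{o(n+m)}$, giving the claimed lower bound.

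Finally I would assemble the cases: if $H_0$ has a vertex of degree $\ge 4$, or $H$ has two components or a single component with two branch vertices forcing the hardness of plain \mis{}, we invoke the folklore subdivision reduction plus Theorem~\ref{thm:MIS}; otherwise $H_0$ is a spider, it contains a claw, and we invoke Theorem~\ref{thm:nosubexpclaw}. In either subcase the lower bound is actually against $2^{o(n)}$ on bounded-degree $H$-free graphs, which is strictly stronger than the stated $2^{o(n+m)}$ bound; weakening to $2^{o(n+m)}$ is immediate since $n\le n+m$.

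The main obstacle, such as it is, is purely organizational: one must be careful that ``$H$ has a non-path component'' genuinely splits into the enumerated cases, i.e.\ that the only tree with at most one branch vertex that is not a path is a spider, and that a spider always contains an induced claw — both elementary but worth stating explicitly so that the reduction of Theorem~\ref{thm:nosubexpclaw} can be cited. One should also note that Theorem~\ref{thm:nosubexpcycle} is not actually needed for this corollary (it handles $C_t$-free graphs, a separate case in the proof of Theorem~\ref{thm:scatteredmain}(2) when $H$ itself is not a forest); I would mention it only to explain why both theorems together, rather than Theorem~\ref{thm:nosubexpclaw} alone, are invoked in the statement of Theorem~\ref{thm:scatteredmain}.
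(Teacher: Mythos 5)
Your case split creates a genuine gap. You route every forest $H$ whose offending component has two branch vertices (or a vertex of degree at least $4$) into the ``plain \mis{} is already hard'' branch, and there you obtain the \textsc{$d$-Scattered Set} bound by composing with ``the standard subdivision trick already exploited in Theorem~\ref{thm:nosubexpcycle}''. That transfer step breaks when $H$ is a forest of maximum degree $3$ with a component containing two vertices of degree $3$ --- for instance the double star consisting of two adjacent degree-$3$ vertices, each with two leaves. The construction of Theorem~\ref{thm:nosubexpcycle} attaches a pendant path to every subdivision vertex, so the graphs it outputs are full of adjacent degree-$3$ vertices with independent neighbourhoods; they contain the double star (and similar trees with two nearby branch vertices) as induced subgraphs regardless of the instance you start from. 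Hence the composed reduction does not produce $H$-free instances for such $H$, and no lower bound for \textsc{$d$-Scattered Set} on $H$-free graphs follows; ETH-hardness of \mis{} on $H$-free graphs by itself says nothing about \textsc{$d$-Scattered Set}, so everything hinges on that transfer. (The clause ``$H$ has two components forcing the hardness of plain \mis{}'' is also not a correct hardness criterion, but that is secondary.) The non-forest and degree-$\ge 4$ subcases of your branch do happen to work, since there high girth, respectively maximum degree $3$, of the produced instances guarantees $H$-freeness.

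The repair is exactly the paper's one-line argument, and it makes the restriction to spiders unnecessary: in a forest, the neighbourhood of any vertex of degree at least $3$ is independent, so \emph{every} forest with a non-path component --- spider, double star, or a component with a degree-$\ge 4$ vertex --- contains an induced claw, and Theorem~\ref{thm:nosubexpclaw} applies verbatim because claw-free graphs are then $H$-free. The only remaining case is $H$ not a forest, where an induced cycle $C_t$ in $H$ lets one quote Theorem~\ref{thm:nosubexpcycle}, whose instances have girth larger than $(d-1)t$ and are therefore $C_t$-free, hence $H$-free. In particular, your closing remark that Theorem~\ref{thm:nosubexpcycle} is not needed for this corollary is mistaken: the hypothesis ``some component of $H$ is not a path'' includes non-forest $H$ such as $H=C_3$, and for those the claw-free instances of Theorem~\ref{thm:nosubexpclaw} (line graphs, which are full of triangles) are not $H$-free, so the $C_t$-free construction is indispensable; the corollary is precisely part (2) of Theorem~\ref{thm:scatteredmain}, not a sub-case of it.
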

\begin{proof}
  Suppose first that $H$ is not a forest and hence some cycle $C_t$
  for $t\ge 3$ appears as an induced subgraph in $H$. Then the class
  of $H$-free graphs is a superset of $C_t$-free graphs, which means
  that statement follows from Theorem~\ref{thm:nosubexpcycle} (which
  gives a lower bound for a more restricted class of graphs).

  Assume
  therefore that $H$ is a forest. Then it must have a component that
  is a tree, but not a path, hence it has a vertex $v$ of degree at
  least 3. The neighbors of $v$ are independent in the forest $H$,
  which means that the claw $K_{1,3}$ appears in $H$ as an induced
  subgraph. Then the class
  of $H$-free graphs is a superset of claw-free graphs, which means
  that statement follows from Theorem~\ref{thm:nosubexpclaw} (which
  gives a lower bound for a more restricted class of graphs).
\end{proof}
\bibliographystyle{abbrv}
\bibliography{references}
\end{document}